\newcommand{\ee}{\mathbb{E}} 
\newcommand{\D}{\mathbb{D}}
\newcommand{\nv}{{\bf n}}
\newcommand{\Xv}{{\bf X}}
\newcommand{\g}{\mathfrak{g}}
\newcommand{\R}{{\mathbb{R}}}
\newcommand{\Z}{{\mathbb{Z}}}
\newcommand{\I}{{\mathbb{I}}}
\newcommand{\beq}{\begin{equation}}
\newcommand{\eeq}{\end{equation}}
\newcommand{\bea}{\begin{eqnarray}}
\newcommand{\eea}{\end{eqnarray}}
\newcommand{\ben}{\begin{eqnarray*}}
\newcommand{\een}{\end{eqnarray*}}
\newcommand{\ra}{\rightarrow}
\newcommand{\rhu}{\rightharpoonup}
\newcommand{\wra}{\rhu}
\newcommand{\hra}{\hookrightarrow}
\newcommand{\cd}{\partial}
\newcommand{\wt}{\widetilde}
\newcommand{\wh}{\widehat}
\newcommand{\less}{\backslash}
\newcommand{\hess}{{\sf Hess}}
\def \d{\mathrm{d}}
\newcommand{\ip}[1]{\langle #1 \rangle}
\newcommand{\ignore}[1]{}
\newcommand{\downsize}[1]{{\tiny #1}}
\renewcommand{\Pr}{\mathrm{Pr}}
\newcommand{\vol}{{\rm vol}}
\newcommand{\ol}{\overline}
\newcommand{\yvec}{\mbox{\boldmath{$y$}}}
\newcommand{\tr}{{\rm tr}\, }
\newcommand{\eps}{\varepsilon}
\renewcommand{\phi}{\varphi}
\renewcommand{\o}{0}
\theoremstyle{plain}
\newtheorem{thm}{Theorem}
\newtheorem{prop}[thm]{Proposition}
\newtheorem{defn}[thm]{Definition}
\newcommand{\news}{\setcounter{equation}{0}}
\newenvironment{proof}{\noindent{\it Proof:\, }}{\hfill$\Box$\vspace*{0.5cm}
}
\renewcommand{\theequation}{\thesection.\arabic{equation}}
\begin{document}

\title{Solitons on tori and soliton crystals}
\author{
J.M. Speight\thanks{E-mail: {\tt speight@maths.leeds.ac.uk}}\\
School of Mathematics, University of Leeds\\
Leeds LS2 9JT, England}

\date{}
\maketitle

\begin{abstract}
Necessary conditions for a soliton on a torus $M=\R^m/\Lambda$ to be a soliton crystal, that is, a spatially periodic array of 
topological solitons in stable equilibrium, are derived. The
stress tensor of the soliton must be $L^2$ orthogonal to $\ee$, the space of parallel symmetric bilinear
forms on $TM$, and, further, a certain symmetric bilinear form on $\ee$, called the hessian, must be positive.
It is shown that, for baby Skyrme models, the first condition actually implies the second. It is also shown that,
for any choice of period lattice $\Lambda$, there is a baby Skyrme model which supports a soliton crystal of periodicity $\Lambda$.
For the three-dimensional Skyrme model, it is shown that any soliton solution on a cubic lattice which
satisfies a virial constraint and is equivariant with respect to (a subgroup of) the lattice
symmetries automatically satisfies both tests. This verifies in particular that the celebrated Skyrme
crystal of Castillejo {\it et al.}, and Kugler and Shtrikman, passes both tests.
\end{abstract}

\maketitle

\section{Introduction}
\news

There are many studies in the mathematical physics literature in which a 
nonlinear field theory known to support topological solitons is studied
not on Euclidean space $\R^m$, but on a torus $T^m=\R^m/\Lambda$. The 
energy minimizers found are then usually interpreted as soliton {\em crystals},
that is, spatially periodic arrays of solitons held in stable equilibrium. 
 However,
once we place the model on a compact domain, every homotopy class of fields will
generically have an energy minimizer. This is true whatever 
period lattice $\Lambda$ we choose, no matter how crazy. Clearly, for a torus
such as the one depicted in figure \ref{fig1}, the energy minimizers cannot
meaningfully be interpreted as soliton crystals: they are an artifact of the
choice of boundary conditions\footnote{Actually, we will see that in the
case of baby Skyrme models, {\em every} two-torus, no matter how bizarre,
does support a soliton crystal for an appropriate choice of potential.}.
The choice of a cubic period lattice certainly {\em looks} more 
plausible. But since we were bound to find an energy minimizer, by compactness,
why should we assume that minimizers found on cubic lattices are not also
artifacts of the boundary conditions? To be sure, we should vary the
energy not just with respect to the field, but also with respect to the
period lattice $\Lambda$. In most numerical studies, this latter variation
(over tori) is only partially performed: the torus is fixed to be cubic, but
its side length is varied. Of
course, there is a good reason for this omission: it is numerically rather intricate
to study field theories on non-rectangular tori. We are aware of only one 
numerical study which systematically does so: Hen and Karliner, in a study
of baby skyrmions, minimized energy over all two-tori \cite{henkar}. 

\begin{figure}
\begin{center}
\includegraphics[scale=0.4]{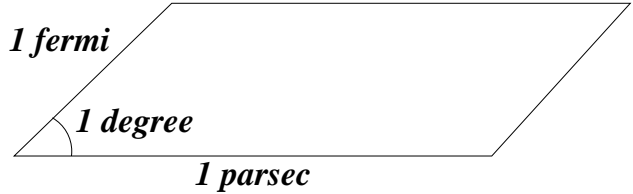}
\end{center}
\label{fig1}
\caption{\sf A possible period lattice for a toric soliton. Clearly a soliton on such a torus is an artifact of the choice of boundary conditions. Or is it?}
\end{figure}

The purpose of this paper is to develop a scheme for determining whether
a given energy minimizer on $\R^m/\Lambda$, assumed to minimize energy
among all fields in its homotopy class on this fixed torus, also minimizes
energy locally with respect to variations of the period lattice. The 
key idea
is that varying the torus among all flat tori with Euclidean metric
is equivalent to fixing the torus but varying the {\em metric}
on the torus among all constant coefficient metrics. This manoeuvre allows
us to formulate the energy variation with respect to the torus
in terms of the stress tensor of the field. We compute both the
first and second variation formulae to give two criteria, involving the stress
tensor, for the field to be a critical point of energy with respect
to variations of the lattice (the first variation) and then, further, a local
minimum of energy with respect to such variations (the second variation). 
The first criterion is that the stress tensor should be $L^2$ orthogonal
to the $\frac12m(m+1)$ dimensional space $\ee$ of 
parallel 
symmetric $(0,2)$ tensors on $\R^m/\Lambda$. The second criterion
is that a certain symmetric bilinear form on $\ee$, called the hessian, should be positive.
These criteria, which  can easily be checked numerically, give necessary,
but not sufficient, conditions for a soliton on a torus to be a genuine 
physical soliton
crystal. (There are two subtleties. First, varying $\Lambda$ locally does not
account for the possibility that energy could be lowered by period-increasing
variations, in which the field is reinterpreted as living on a larger
torus obtained by gluing neighbouring tori together. Such variations do
not just
jump discontinuously in the space of lattices, they also jump
to another homotopy class of fields, and so are inherently inaccessible to
variational calculus. Second, our scheme considers variations of the 
field and the lattice separately, so that cross terms in the second variation,
arising from simultaneous variations of both field and lattice, are
not considered.) Nonetheless, for the sake of
terminological convenience, we shall say that an energy minimizer
on a fixed torus is a soliton {\em lattice} if it satisfies the first 
criterion, and a soliton {\em crystal} if it also satisfies the second.

The paper is structured as follows. In section \ref{sec:var} we derive
the criteria for a general scalar field theory and determine how
symmetries of the field imply symmetries of its stress tensor and hessian. 
In section \ref{sec:bs}
we apply the criteria in the context of the baby-Skyrme model. We find that
a toric baby skyrmion is a soliton lattice if it satisfies a virial
constraint of Derrick type and is conformal on average, in the $L^2$
sense. We show further that every baby skyrmion lattice is a soliton
crystal, that is, the second criterion follows immediately from the
first in the baby Skyrme case. We also show that for any choice of 
period lattice $\Lambda$, there exists a potential for which the baby Skyrme
model has a crystal with this periodicity. 
Throughout this paper our focus is on variation of the energy with respect to the period lattice, rather than the more usual problem of varying the
field on a fixed domain. Indeed, the existence, on any compact domain and in any degree class, of an energy minimizer in a function space with sufficient
regularity for our criteria to make rigorous sense is typically already known in the literature. 
An exception to this would appear to be the baby Skyrme model, whose general existence problem on compact domains does not seem to have been
studied rigorously. At the end of section \ref{sec:bs} we fill in this gap, proving an existence result on general tori for the model with
target space $S^2$ and an arbitrary potential.
In section \ref{sec:skyrme} we
consider the usual nuclear Skyrme model in three dimensions, and some 
interesting variants which are of current phenomenological interest. 
It is shown that,
on a cubic lattice, any energy minimizer which satisfies a certain virial
constraint of Derrick type, and is equivariant with repect to (a 
certain subgroup of) the lattice symmetries is automatically a soliton
crystal. This result implies that, in particular,  the ``Skyrme crystal'' found numerically
by Castillejo {\it et al.} 
\cite{casjonjacverjac} and Kugler and Shtrikman \cite{kugsht}
is a soliton crystal according to our definition. Some concluding remarks
are presented in section \ref{sec:conc}.

\section{Varying over the space of tori}
\news
\label{sec:var}

Consider a general static scalar field theory defined by some energy
functional $E(\phi)$ for a field $\phi:\R^m\ra N$, where $N$ is some
target space. Given an energy minimizer $\phi:\R^m/\Lambda_*\ra N$,
where $\Lambda_*=\{n_1\Xv_1+n_2\Xv_2+\cdots+n_m\Xv_m\: :\: \nv\in\Z^m\}$ is
some fixed lattice in $\R^m$, when can the lifted map $\R^m\ra N$ be interpreted
as a soliton crystal? The answer is that it should be critical, and in
fact stable, with respect to variations of the lattice $\Lambda$
around $\Lambda_*$, as well
as variations of the field. Now, all $m$-tori are diffeomorphic 
through linear maps $\R^m\ra\R^m$, so we can identify them all with 
$M=\R^m/\Lambda_*$, the torus of interest. So the manifold and $\phi:M\ra N$
are now fixed, and varying the lattice is equivalent to varying 
the {\em metric} on $M$ by pulling back the standard Eucldean metric
on $\R^k/\Lambda$ to $M$ by the inverse of the diffeomorphism
$\R^k/\Lambda\ra M$. Let us denote this metric on $M$ by $g_\Lambda$.
Varying among all lattices, one sees that
\beq
g_\Lambda=\sum_{ij}g_{ij}(\Lambda)dx_idx_j
\eeq
where $g_{ij}(\Lambda)$ are constant and $g_{ij}(\Lambda_*)=\delta_{ij}$.
Consider now a curve $g_t$ in this space of metrics on $M$ such that
$g_0=g_{\Lambda_*}$ the Euclidean metric, and denote by 
\beq
\eps=\cd_t|_{t=0}g_t\in\Gamma(T^*M\odot T^*M)
\eeq
its initial tangent vector. Then $\eps$ lies in the space of allowed
variations
\beq\label{eedef}
\ee=\{\sum_{ij}\eps_{ij}dx_idx_j\: :\: \mbox{$\eps_{ij}$ constant, $\eps_{ij}=\eps_{ji}$}\}.
\eeq
This is a $m(m+1)/2$ dimensional subspace of the space of sections of the
rank $m(m+1)/2$ vector bundle $T^*M\odot T^*M$ (where $\odot$ denotes symmetrized tensor product), which is canonically 
isomorphic to any fibre. The canonical isomorphism $\ee\ra T_x^*M\odot T_x^*M$
is given by evaluation $\eps\mapsto \eps(x)$.  Now each fibre $T_x^*M\odot T_x^*M$ has a canonical inner product,
\beq
\ip{\wh\eps,\eps}=\sum_{i,j}\wh\eps(e_i,e_j)\eps(e_i,e_j)
\eeq
where $\{e_1,\ldots,e_m\}$ is any orthonormal frame of vector fields on $M$. Hence, the isomorphism $\ee\ra T_x^*M\odot T_x^*M$
equips $\ee$ with a
canonical inner product, which we
will denote $\ip{\cdot,\cdot}_{\ee}$. Note that this is independent of the choice of base point $x$. The inverse isomorphism is defined
by parallel propagation, so we refer to $\ee$ as the space of parallel
symmetric bilinear forms. 

Now for {\em any} variation $g_t$ of the metric,
\beq
\left.\frac{d E(\phi,g_t)}{dt}\right|_{t=0}=:\ip{\eps,S}_{L^2}
\eeq
where $S$, by definition, is the {\em stress tensor} of the field, defined in analogy with the stress-energy-momentum tensor familiar from relativity
theory (see \cite{baieel} for the original derivation of $S$ in the important case that $E$ is the Dirichlet energy). Like $g$ 
and $\eps$, $S$ is
a section of $T^*M\odot T^*M$. So $E$ is critical for variations of the
lattice $\Lambda$ if and only if $S\perp_{L^2}\ee$. We can reformulate this condition as follows. Let $\ee_0$
denote the orthogonal complement of $g$ in $\ee$, that is, the space of {\em traceless} parallel symmetric
bilinear forms. Then $S\perp_{L^2}\ee$ if and only if
\beq\label{ijd}
\ip{S,g}_{L^2}=0
\eeq
and
\beq\label{injd}
S\perp_{L^2}\ee_0.
\eeq
Condition (\ref{ijd}) is a virial constraint, analogous to the constraint obtained for 
the model on $\R^m$ using the Derrick scaling argument \cite{der}. In fact, if we replace the torus $M$ by
$\R^m$ and assume that $\phi$ is a finite energy solution, then (\ref{ijd}) coincides precisely
with the Derrick virial constraint.
Similarly (\ref{injd}) also coincides (under the replacement of $M$ by $\R^m$) with the collection of
virial constraints ``beyond Derrick's theorem'' obtained by Manton \cite{man-der} and generalized recently
by Domokos {\it et al.} \cite{domhoyson}. In brief, then, in order to be critical with respect to
variations of the period lattice, a toric soliton must satisfy the generalized Derrick virial constraints
in each lattice cell.

A convenient reformulation of (\ref{injd}) arises as follows. Given any point $x\in M$
and any pair of tangent vectors $X,Y\in T_xM$, these have unique parallel
propagations over $M$, which we also denote $X,Y$. Associated to the map
$\phi$, we can define a symmetric bilinear form
\beq\label{arireb}
\Delta:T_xM\times T_xM\ra\R,\qquad
\Delta(X,Y)=-2\int_MS_0(X,Y)\vol_g
\eeq
where $S_0$ denotes the trace-free part of $S$ (the factor of $-2$ is for later convenience).
Using the canonical identification of $\ee$ with any fibre $T_x^*M\odot
T_x^*M$, we can identify $\Delta$ with an element of $\ee$.
Then $S\perp_{L^2}\ee_0$ if and only if $S_0\perp_{L^2}\ee_0$, which holds if and only if $\Delta$ is orthogonal to
$\ee_0$ with respect to $\ip{\cdot,\cdot}_\ee$, and hence, if and only if
\beq
\Delta=\lambda g,
\eeq
for some constant $\lambda\in\R$ which, if required, can be found by evaluating $\Delta$
on any unit vector.

Assume that the pair $(\phi,g)$ satisfies these conditions, (\ref{ijd}), (\ref{injd}).
Then we can define the second variation of $E$
with respect to the metric in the affine space $g+\ee$. Let $g_{s,t}$
be a two-parameter family of metrics in $g+\ee$ with $g_{0,0}=g$, and
let $\wh\eps=\cd_sg_{s,t}|_{(0,0)},\eps=\cd_tg_{s,t}|_{(0,0)}\in\ee$. Then
the {\em hessian} of $E:g+\ee\ra\R$ at $g$ is, by definition, the 
symmetric bilinear form $\hess\in\ee^*\odot\ee^*$ such that
\beq
\hess(\wh\eps,\eps)=\left.\frac{\cd^2 E(\phi,g_{s,t})}{\cd s\cd t}\right|_{s=t=0}.
\eeq
The pair $(\phi,g)$ is a local minimum of $E$, with respect to variations
of the lattice, if this bilinear form is positive definite. 
This leads us
to the following:

\begin{defn}\label{ariel}
 An $E$ minimizer $\phi:\R^m/\Lambda\ra N$ is a {\em soliton lattice} if
its stress tensor $S(\phi)$ is $L^2$ orthogonal to $\ee$, the space of 
parallel symmetric bilinear forms. A soliton lattice is a {\em soliton crystal} if, in addition,
its hessian is positive definite.
\end{defn}

The detailed structure of $S$, and hence of $\hess$, depend on the details of the 
field theory, We can, however, find a semi-explicit formula for $\hess$ which
turns out to be rather useful for our purposes. To state it, we need to define
the natural contraction for pairs of bilinear forms on $M$. So, let $A,B$ be
bilinear forms (i.e.\ $(0,2)$ tensors) on $M$ and $e_1,\ldots,e_m$ be any
local orthonormal frame of vector fields on $M$. Then we define $A\cdot B$
to be the bilinear form
\beq
(A\cdot B)(X,Y)=\sum_i A(X,e_i)B(e_i,Y).
\eeq
If we identify a bilinear form with its matrix of components relative to the
frame $\{e_i\}$ then this contraction coincides with matrix multiplication.

\begin{prop}\label{hessprop}
 Let $\phi$ be a soliton lattice, and $\hess$ be its hessian. Then, for
all $\wh\eps,\eps\in\ee$,
$$
\hess(\wh\eps,\eps)=\ip{\dot S,\eps}_{L^2}-2\ip{\wh\eps,S\cdot\eps}_{L^2},
$$
where $\dot S=\cd_s|_{s=0}S(\phi,g_s)\in\Gamma(T^*M\odot T^*M)$ for 
any generating curve $g_s$ for $\wh\eps$. In particular, if $\wh\eps=\eps$, then
$$
\hess(\eps,\eps)=\ip{\dot S,\eps}_{L^2}.
$$
\end{prop}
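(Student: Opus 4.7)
The plan is to compute $\hess(\wh\eps,\eps) = \cd_s\cd_t E(\phi,g_{s,t})|_{s=t=0}$ directly, using the convenient affine family $g_{s,t} = g + s\wh\eps + t\eps \in g+\ee$, and performing the $t$-differentiation first. By the defining property of the stress tensor,
\[
\cd_t E(\phi,g_{s,t})|_{t=0} = \int_M \ip{\eps,\, S(\phi,g_{s,0})}_{g_{s,0}}\, \vol_{g_{s,0}},
\]
so the task reduces to $s$-differentiating this single integral at $s=0$. Its $s$-dependence enters through exactly three channels: (i) the stress tensor itself, (ii) the pointwise inner product (through $g_{s,0}^{-1}$), and (iii) the volume form.

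Channel (i) is immediate: $\cd_s S(\phi,g_{s,0})|_{s=0} = \dot S$ by definition, and contributes $\ip{\dot S,\eps}_{L^2}$. For channels (ii) and (iii) I would use the standard identities
\[
\cd_s g_{s,0}^{-1}|_{s=0} = -g^{-1}\wh\eps\, g^{-1}, \qquad \cd_s \vol_{g_{s,0}}|_{s=0} = \tfrac12 (\tr_g \wh\eps)\, \vol_g,
\]
and evaluate pointwise in an orthonormal frame. Writing $\wh\eps, \eps, S$ also for the component matrices, channels (ii) and (iii) together contribute the pointwise quantity
\[
-\tr(\wh\eps\, \eps\, S) - \tr(\wh\eps\, S\, \eps) + \tfrac12 \tr(\wh\eps)\tr(\eps\, S).
\]

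The key algebraic step is the identity $\tr(\wh\eps\, S\, \eps) = \tr(\wh\eps\, \eps\, S)$, valid for any three symmetric matrices (take the transpose of the left-hand side and then permute cyclically). This collapses the first two terms into $-2\tr(\wh\eps\,\eps\, S)$, which is precisely $-2\ip{\wh\eps, S\cdot\eps}$ pointwise, producing the second term of the claimed formula after integration.

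The main obstacle is disposing of the remaining pointwise piece $\tfrac12\tr(\wh\eps)\tr(\eps S) = \tfrac12 (\tr_g\wh\eps)\ip{\eps, S}_g$, and its elimination uses both hypotheses at once: because $\wh\eps \in \ee$ is parallel, $\tr_g\wh\eps$ is a global constant and pulls out of the integral to leave $\tfrac12(\tr_g\wh\eps)\ip{\eps,S}_{L^2}$, which vanishes because $\eps \in \ee$ while $S\perp_{L^2}\ee$ by the soliton lattice hypothesis. This yields the main formula. For the ``in particular'' statement I would set $\wh\eps = \eps$ and note that the contraction $\eps\cdot\eps$ (i.e.\ the matrix square $\eps^2$) is again a parallel symmetric bilinear form, hence lies in $\ee$; the soliton lattice condition then gives $\ip{\eps, S\cdot\eps}_{L^2} = \ip{\eps\cdot\eps, S}_{L^2} = 0$, and the formula collapses to $\hess(\eps,\eps) = \ip{\dot S,\eps}_{L^2}$.
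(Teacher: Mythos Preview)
Your argument is correct and follows essentially the same route as the paper's proof: differentiate the first variation formula $\cd_t E|_{t=0}=\ip{S,\eps}_{L^2,g_s}$ in $s$, track the dependence through $S$, the pointwise inner product, and the volume form, and use the lattice hypothesis to kill the residual $\tfrac12(\tr_g\wh\eps)\ip{\eps,S}_{L^2}$ term. The only difference is that the paper works with an arbitrary two-parameter variation $g_{s,t}$ in $g+\ee$, so a fourth term $\ip{S,\dot\eps}_{L^2}$ appears (where $\dot\eps=\cd_s|_{s=0}\eps_s$) and is disposed of separately via $\dot\eps\in\ee$, $S\perp_{L^2}\ee$; your affine choice $g_{s,t}=g+s\wh\eps+t\eps$ makes $\eps_s\equiv\eps$ and suppresses this term from the outset, which is a legitimate and mild streamlining since the hessian on an affine space is independent of the variation chosen.
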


\begin{proof} Let $g_{s,t}$ be any two-parameter variation of $g=g_{0,0}$ in
$g+\ee$, and $\wh\eps=\cd_s|_{s=0}g_{s,0}$, $\eps_s=\cd_t|_{t=0}g_{s,t}$,
$\eps=\eps_0$. 
Let $g_s=g_{s,0}$.
Then
\bea
\hess(\wh\eps,\eps)&=&\left.\frac{d\:}{ds}\right|_{s=0}
\left.\frac{\cd E(\phi,g_{s,t})}{\cd t}\right|_{t=0}
=\left.\frac{d\:}{ds}\right|_{s=0}\ip{S(\phi,g_s),\eps_s}_{L^2,g_s}\nonumber \\
&=&\left.\frac{d\:}{ds}\right|_{s=0}\int_M
\ip{S(\phi,g_s),\eps_s}_{g_s}\vol_{g_s}\nonumber \\
&=&\int_M\left\{
\ip{\dot{S},\eps}\vol_g+\ip{S,\dot\eps}\vol_g
+\left.\frac{d\:}{ds}\right|_{s=0}\ip{S,\eps}_{g_s}\vol_g
+\ip{S,\eps}_g\left.\frac{d\:}{ds}\right|_{s=0}\vol_{g_s}\right\}.
\label{flita}
\eea
Now $\dot\eps=\cd_s|_{s=0}\eps_s\in\ee$ and $S\perp_{L^2}\ee$ by assumption
($\phi$ is assumed to be a lattice), so the second term vanishes, upon 
integration over $M$. 

For any fixed pair $A,B$ of symmetric bilinear forms,
\beq
\left.\frac{d\:}{ds}\right|_{s=0}\ip{A,B}_{g_s}
=-2\ip{A\cdot B, \wh\eps}_g.
\eeq
To check this, we can work in a local coordinate chart and use the Einstein
summation convention. Since $g_{ij}(s)g^{jk}(s)=\delta_i^k$ for all $s$, we deduce
that $\dot{g}^{ij}=d g^{ij}(s)/ds|_{s=0}=-g^{ip}\eps_{pq}g^{qj}$. Hence
\bea
\left.\frac{d\:}{ds}\right|_{s=0}\ip{A,B}_{g_s}
&=&
\left.\frac{d\:}{ds}\right|_{s=0}A_{ij}g^{jk}(s)B_{kl}g^{li}(s)\nonumber \\
&=&-A_{ij}g^{jp}\eps_{pq}g^{qk}B_{kl}g^{li}
-A_{ij}g^{jk}B_{kl}g^{lp}\eps_{pq}g^{pi}\nonumber \\
&=&-2(A\cdot B)_{il}g^{lp}\eps_{pq}g^{qi}
=-2\ip{A\cdot B,\eps}_g.
\eea
Hence the third term in (\ref{flita}) reproduces the second term in the
formula claimed.

The variation of the volume form is known to be \cite[p.~82]{baiwoo}
\beq\label{sopstr}
\left.\frac{d\:}{ds}\right|_{s=0}\vol_{g_s}=\frac12\ip{\wh\eps,g}\vol_g.
\eeq
Note that, since $\wh\eps,g\in\ee$, $\ip{\wh\eps,g}=\ip{\wh\eps,g}_\ee$,
which is constant. Hence, the last term reduces to $\ip{\wh\eps,g}_\ee
\ip{S,\eps}_{L^2}$ which vanishes since $\phi$ is a lattice. This completes
the proof of the first formula.

Now assume $\wh\eps=\eps$. For any triple of symmetric bilinear forms $A,B,C$,
$
\ip{A,B\cdot C}_g=\ip{C,A\cdot B}_g
$
and so $\ip{\eps,S\cdot\eps}_{L^2}=\ip{S,\eps\cdot\eps}_{L^2}=0$
since, for all $\eps\in\ee$, $\eps\cdot\eps\in\ee$, and $S\perp_{L^2}\ee$.
\end{proof}

We will see that, in the case of three-dimensional Skyrme models, the task of checking that a given $E$-minimizer
$\phi$ satisfies Definition \ref{ariel} can be greatly simplified if $\phi$ is equivariant with respect to (some
subgroup of) the symmetries of the period lattice $\Lambda$. To exploit equivariance we first need to extract its consequences
for $S$ and $\hess$. The following basic symmetry properties may prove useful in contexts other than toric solitons, so we
formulate them in some generality.

Let $K$ be any group acting isometrically on the left
on the riemannian manifold $(M,g_0)$, and let $E(\phi,g)$ be a geometrically natural energy functional
on the space of smooth maps $\phi:M\ra N$ and metrics on $M$. By geometrically natural, we mean that, for any diffeomorphism
$\psi:M\ra M$, 
\beq
E(\phi\circ\psi,g)=E(\phi,(\psi^{-1})^*g)
\eeq
for all $\phi$ and $g$. In local coordinates, $\psi$ can be thought of as a passive transformation, that is, a change of coordinate, 
in which case, the condition above simply means that
$E$ is independent of the choice of coordinates. Note that $E(\phi,g_0)$ (with fixed metric) is automatically invariant under the $K$ action on $M$, since
$E(\phi\circ k^{-1},g_0)=E(\phi,k^*g_0)=E(\phi,g_0)$ as $K$ acts by isometries.
Let $K$ also act on $N$ on the left, in such a way that $E$ is invariant under this $K$ action, that is,
\beq
E(k\circ\phi,g)=E(\phi,g)
\eeq
for all $k\in K$, $\phi$ and $g$. Such a functional will be called $K$-invariant.
We say that a field $\phi:M\ra N$ is $K$-equivariant if $\phi\circ k=k\circ\phi$ for all $k\in K$. 

\begin{prop}
\label{sym1}
 Let $\phi:(M,g_0)\ra N$ be a $K$-equivariant field, and $S$ be its stress tensor with respect to a geometrically natural
$K$-invariant energy functional. Then $k^*S=S$ for all $k\in K$. 
\end{prop}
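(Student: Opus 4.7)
The strategy is to use the characterization of $S$ as the $L^2$-gradient of $E$ with respect to the metric, and then exploit $K$-equivariance together with geometric naturality to show that this gradient is invariant under pullback by $k$.

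First, I would fix $k\in K$ and start from an arbitrary one-parameter variation $g_t$ of metrics on $M$ with $g_0$ the given metric and $\eps=\cd_t|_{t=0}g_t$ an arbitrary smooth symmetric $(0,2)$ tensor. By definition of the stress tensor,
\[
\left.\frac{d}{dt}\right|_{t=0}E(\phi,g_t)=\ip{S,\eps}_{L^2,g_0}.
\]
Now consider the auxiliary variation $h_t:=k^*g_t$. Since $k$ is an isometry of $g_0$, one has $h_0=g_0$, and $\cd_t|_{t=0}h_t=k^*\eps$, so the defining identity gives $\frac{d}{dt}|_{t=0}E(\phi,h_t)=\ip{S,k^*\eps}_{L^2}$.

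The key step is to rewrite $E(\phi,h_t)$ using the two hypotheses. Geometric naturality (with $\psi=k^{-1}$) yields $E(\phi,k^*g_t)=E(\phi\circ k^{-1},g_t)$. $K$-equivariance of $\phi$ converts $\phi\circ k^{-1}$ into $k^{-1}\circ\phi$, and $K$-invariance of $E$ then removes the factor of $k^{-1}$, leaving $E(\phi,h_t)=E(\phi,g_t)$. Differentiating at $t=0$ gives
\[
\ip{S,k^*\eps}_{L^2}=\ip{S,\eps}_{L^2}.
\]

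To conclude $k^*S=S$, I would use the basic fact that an isometry preserves both the pointwise inner product on tensors and the volume form, so for any pair of sections $A,B$ of $T^*M\odot T^*M$,
\[
\ip{A,B}_{L^2}=\ip{k^*A,k^*B}_{L^2}.
\]
Applied to $A=S$, $B=\eps$, this turns the previous identity into $\ip{k^*S,k^*\eps}_{L^2}=\ip{S,k^*\eps}_{L^2}$, i.e.\ $\ip{k^*S-S,k^*\eps}_{L^2}=0$. Since $\eps\mapsto k^*\eps$ is a bijection on smooth symmetric $(0,2)$ tensors (pullback by a diffeomorphism), the test section $k^*\eps$ is arbitrary, forcing $k^*S=S$.

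\textbf{Expected difficulty.} The argument is essentially a bookkeeping exercise in pullback conventions; the main place one could slip is in combining geometric naturality (a relation between pulling back metrics and precomposing with diffeomorphisms) with $K$-equivariance (a relation between pre- and post-composition) in the correct order, so that the two residual copies of $k$ really do cancel. Once this is done cleanly, the conclusion is immediate from the non-degeneracy of the $L^2$ pairing.
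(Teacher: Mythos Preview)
Your argument is correct and follows essentially the same route as the paper: both compute $\ip{S,k^*\eps}_{L^2}$ via the chain $E(\phi,k^*g_t)=E(\phi\circ k^{-1},g_t)=E(k^{-1}\circ\phi,g_t)=E(\phi,g_t)$, then convert this into $k^*S=S$ using that $k^*$ is an $L^2$ isometry on symmetric $(0,2)$ tensors. Your use of the isometry property together with bijectivity of $k^*$ is a clean way to finish, equivalent to (and arguably tidier than) the paper's appeal to ``$L^2$ self-adjointness'' of $k^*$.
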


\begin{proof} 
Let $g_t$ be an arbitrary variation of $g_0$, and $\eps=\cd|_{t=0}g_t$. Since $K$ acts isometrically on $(M,g_0)$, each $k^*$ is $L^2$ self-adjoint
on $\Gamma(T^*M\odot T^*M)$. Hence
\bea
\ip{k^*S,\eps}_{L^2}&=&\ip{S,k^*\eps}_{L^2}
=\left.\frac{d\: }{dt}\right|_{t=0}E(\phi,k^*g_t)\nonumber\\
&=&\left.\frac{d\: }{dt}\right|_{t=0}E(\phi\circ k^{-1},g_t)\qquad\mbox{($E$ is geometrically natural)}\nonumber \\
&=&\left.\frac{d\: }{dt}\right|_{t=0}E(k^{-1}\circ\phi,g_t)\qquad\mbox{($\phi$ is $K$-equivariant)}\nonumber \\
&=&\left.\frac{d\: }{dt}\right|_{t=0}E(\phi,g_t)\qquad\mbox{($E$ is $K$-invariant)}\nonumber \\
&=&\ip{S,\eps}_{L^2}.
\eea
Since this holds for all variations, and all $k$, the result follows.
\end{proof}

A similar argument shows that the hessian is also $K$-invariant. 

\begin{prop}\label{sym2}
 Let $\phi:(M,g_0)\ra N$ be a $K$-equivariant field, and $S$ be its stress tensor with respect to a geometrically natural
$K$-invariant energy functional $E$. Assume that $S$ is $L^2$ orthogonal to some $K$-invariant linear subspace $\ee\subset \Gamma(T^*M\odot T^*M)$, and let
$\hess:\ee\times\ee\ra\R$ 
be the hessian of $E$ with respect to variations in the affine space $g_0+\ee$. Then $(k^*)^*\hess=\hess$ for all $k\in K$.
\end{prop}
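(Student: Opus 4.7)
The plan is to adapt the proof of Proposition \ref{sym1} by working with a two-parameter family of metrics and pulling back by $k$. Fix $\wh\eps,\eps\in\ee$ and choose any two-parameter variation $g_{s,t}$ in the affine space $g_0+\ee$ with $g_{0,0}=g_0$, $\cd_s|_{(0,0)}g_{s,t}=\wh\eps$ and $\cd_t|_{(0,0)}g_{s,t}=\eps$. By definition,
$$
\hess(\wh\eps,\eps)=\left.\frac{\cd^2 E(\phi,g_{s,t})}{\cd s\cd t}\right|_{s=t=0}.
$$
Note that this expression is independent of the particular choice of two-parameter variation generating $(\wh\eps,\eps)$ because $\phi$ is a soliton lattice, so the first-variation term $\ip{S,\cd_s\cd_t g_{s,t}|_{(0,0)}}_{L^2}$ vanishes.

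First I would verify that $\wt g_{s,t}:=k^*g_{s,t}$ is also an admissible variation in $g_0+\ee$, now generating $(k^*\wh\eps, k^*\eps)$. Since $k$ acts isometrically, $k^*g_0=g_0$, and since $\ee$ is $K$-invariant, pullback by $k$ preserves $g_0+\ee$; moreover $\cd_s|_{(0,0)}\wt g_{s,t}=k^*\wh\eps$ and $\cd_t|_{(0,0)}\wt g_{s,t}=k^*\eps$, both of which lie in $\ee$.

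Next, using the three hypotheses in turn exactly as in Proposition \ref{sym1},
$$
E(\phi,\wt g_{s,t})
=E(\phi,k^*g_{s,t})
=E(\phi\circ k^{-1},g_{s,t})
=E(k^{-1}\circ\phi,g_{s,t})
=E(\phi,g_{s,t}),
$$
where the three middle equalities use, respectively, that $E$ is geometrically natural, that $\phi$ is $K$-equivariant, and that $E$ is $K$-invariant. Differentiating both ends with respect to $s$ and $t$ at $s=t=0$ yields
$$
\hess(k^*\wh\eps,k^*\eps)
=\left.\frac{\cd^2 E(\phi,\wt g_{s,t})}{\cd s\cd t}\right|_{s=t=0}
=\left.\frac{\cd^2 E(\phi,g_{s,t})}{\cd s\cd t}\right|_{s=t=0}
=\hess(\wh\eps,\eps).
$$
By the definition of the dual action on $\ee^*\odot\ee^*$, the left-hand side is $((k^*)^*\hess)(\wh\eps,\eps)$, and since $\wh\eps,\eps\in\ee$ were arbitrary, $(k^*)^*\hess=\hess$.

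The only non-routine point is the well-definedness of $\hess$ as a bilinear form on $\ee$, but this is precisely what the soliton lattice hypothesis guarantees (and is implicit in the hypothesis that the hessian is being considered at all). Everything else is a direct transcription of the argument of Proposition \ref{sym1}, with a single $t$-derivative replaced by the mixed $s,t$-derivative, so no new obstacle arises.
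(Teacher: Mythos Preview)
Your proof is correct and follows essentially the same approach as the paper's: choose a two-parameter variation generating $(\wh\eps,\eps)$, observe that its pullback by $k$ is an admissible variation generating $(k^*\wh\eps,k^*\eps)$, and then use geometric naturality, $K$-equivariance of $\phi$, and $K$-invariance of $E$ in turn to conclude that $E(\phi,k^*g_{s,t})=E(\phi,g_{s,t})$ for all $s,t$, whence the mixed second derivatives agree. Your version is slightly more explicit about admissibility of the pulled-back variation and about well-definedness of $\hess$, but the argument is the same.
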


\begin{proof} Let $g_{s,t}$ be an arbitrary two-parameter variation of $g_0=g_{0,0}$, $\wh\eps=\cd_s|_{s=0}g_{s,0}$, $\eps=\cd_t|_{t=0}g_{0,t}$, and $k\in K$.
Then
\bea
(k^*)^*\hess(\wh\eps,\eps)&=& \hess(k^*\wh\eps,k^*\eps)=\left.\frac{\cd^2\: \: }{\cd s\cd t}\right|_{s=t=0}E(\phi,k^*g_{s,t})\nonumber \\
&=&\left.\frac{\cd^2\: \: }{\cd s\cd t}\right|_{s=t=0}E(\phi\circ k^{-1},g_{s,t})\qquad\mbox{($E$ is geometrically natural)}\nonumber \\
&=&\left.\frac{\cd^2\: \: }{\cd s\cd t}\right|_{s=t=0}E(k^{-1}\circ \phi,g_{s,t})\qquad\mbox{($\phi$ is $K$-equivariant)}\nonumber \\
&=&\left.\frac{\cd^2\: \: }{\cd s\cd t}\right|_{s=t=0}E(\phi,g_{s,t})\qquad\mbox{($E$ is $K$-invariant)}\nonumber \\
&=&\hess(\wh\eps,\eps).
\eea
Since this holds for all variations, and all $k$, the result follows.
\end{proof}

To conclude this section, we return to the setting of interest: $M=\R^m/\Lambda$, a torus with the euclidean metric $g$, 
$\ee$ is the space of
parallel symmetric bilinear forms on $M$, and $K$ is some subgroup of $O(m)$ which preserves $\Lambda$ (and hence
acts isometrically on $M$). Note that $\ee$ is indeed $K$-invariant. Let $\phi$ be $K$-equivariant, and $\Delta\in\ee$ be the bilinear form 
defined in (\ref{arireb}). Then $\Delta$ is itself $K$-invariant.

\begin{prop}\label{sym3}
 If $\phi:M\ra N$ is $K$-equivariant, then $k^*\Delta=\Delta$ for all $k\in K$.
\end{prop}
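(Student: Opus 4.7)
The plan is to deduce $K$-invariance of $\Delta$ from $K$-invariance of $S_0$, using that $K$ acts by isometries so all ingredients in the averaging definition of $\Delta$ are $K$-equivariant. The main input is Proposition \ref{sym1}, which asserts $k^*S = S$. Since $k$ is an isometry, $k^*g = g$ and the $g$-trace commutes with $k^*$; writing $S_0 = S - \frac{1}{m}(\mathrm{tr}_g S)\,g$ then yields $k^*S_0 = S_0$.

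Next I would unpack $k^*\Delta = \Delta$ pointwise. Fix $x \in M$ and $X, Y \in T_xM$, denoting their parallel extensions throughout $M$ by the same letters. Since $k$ is an isometry of the flat metric, it permutes parallel vector fields, so the parallel extensions of $k_*X, k_*Y \in T_{kx}M$ are precisely the pushforward fields $k_*X, k_*Y$. Applying the defining formula (\ref{arireb}) at the base point $kx$,
\[
(k^*\Delta)(X, Y) \;=\; \Delta_{kx}(k_*X, k_*Y) \;=\; -2\int_M S_0(k_*X, k_*Y)\,\vol_g.
\]
Now change variables $p \mapsto kp$ in this integral: $k$ preserves $\vol_g$, and the identity $k^*S_0 = S_0$ reads $S_0|_{kp}(k_*X, k_*Y) = S_0|_p(X, Y)$, so the right-hand side collapses to $-2\int_M S_0(X,Y)\,\vol_g = \Delta(X,Y)$, as required.

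The argument is a formal consequence of Proposition \ref{sym1} together with the change-of-variables formula, so I do not expect any computational obstacle. The one point that needs care, which I have used tacitly above, is that the identification $\ee \cong T_x^*M \odot T_x^*M$ via parallel propagation intertwines the pullback action of $K$ on parallel sections with its linear action on a single fibre; this is automatic because $K$ acts by isometries of a flat metric and hence carries parallel sections to parallel sections.
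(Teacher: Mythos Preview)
Your proof is correct and follows essentially the same route as the paper: both arguments reduce the claim to Proposition \ref{sym1} ($k^*S=S$) together with the fact that $K$ acts by isometries. The paper splits $\Delta=\lambda g+\Delta'$ with $\Delta'(X,Y)=-2\int_M S(X,Y)\,\vol_g$ and invokes $k^*g=g$ and $k^*S=S$ separately, whereas you first deduce $k^*S_0=S_0$ and then run the change-of-variables argument directly on the defining integral; this is only a cosmetic reorganisation, and your version is in fact more explicit about the integration step that the paper leaves implicit.
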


\begin{proof} Clearly
\beq
\Delta(X,Y)=\lambda g(X,Y)-2\int_M S(X,Y)\vol_g=:\lambda g(X,Y)+\Delta'(X,Y)
\eeq
for some constant $\lambda$, and $k^*g=g$ for all $k\in K$ since $K$ acts isometrically. Hence, it suffices to show that
$k^*\Delta'=\Delta'$. But this follows immediately from Proposition \ref{sym1}.
\end{proof}

\section{Baby skyrmion crystals}
\news
\label{sec:bs}

The baby Skyrme model concerns a single scalar field $\phi:M\ra N$ where
$(M,g)$ is an oriented riemanian two-manifold (in our case, a torus
$\R^2/\Lambda$) and $(N,h,\omega)$ is a compact k\"ahler manifold 
(usually chosen to be
$S^2$) with metric $h$ and k\"ahler form $\omega$. The energy is
\beq
E(\phi)=\frac12\int_M\left\{|\d\phi|^2+|\phi^*\omega|^2+U(\phi)^2\right\}
\vol_g
\eeq
where we have found it convenient to write the potential as $\frac12U^2$ where
$U:N\ra\R$ is some function. It is conventional to label the three terms in $E$ as $E_2$, $E_4$ and $E_0$
respectively. The subscript specifies the degree of the integrand
thought of as a polynomial in spatial partial derivatives.

Let $\phi:M=T^2/\Lambda\ra N$
minimize $E$ among maps (on this fixed torus) in its homotopy class. Under
what circumstances is $\phi$ a soliton lattice, as defined in section
\ref{sec:var}?
The stress tensor of $\phi$ is easily computed \cite{jayspesut},
\beq\label{bsS}
S(\phi,g)=\frac14\left(|\d\phi|^2_g-|\phi^*\omega|^2_g+U(\phi)^2
\right)g-\frac12\phi^*h.
\eeq
Now $\phi$ is a lattice if and only if $S(\phi,g)$ is $L^2$ orthogonal to
both $g$ and $\ee_0$,
the space of traceless parallel symmetric bilinear forms, yielding conditions (\ref{ijd}) and (\ref{injd}). Now $\ip{g,g}=2$ and 
\beq
\ip{g,\phi^*h}=\sum_{i,j}g(e_i,e_j)\phi^*h(e_i,e_j)=\sum_i\phi^*h(e_i,e_i)
=|d\phi|^2
\eeq
so
\beq
\ip{S,g}_{L^2}=\frac12(E_0-E_4).
\eeq
Hence (\ref{ijd}) becomes the familiar virial constraint
\beq\label{ijdbs}
E_0=E_4.
\eeq
 This is the condition
which is enforced in numerical studies of baby skyrmion crystals by minimization
of $E$ over the sidelength of the torus.

For all $\eps\in\ee_0$,
$\ip{S,\eps}_{L^2}=-\frac12\ip{\phi^*h,\eps}_{L^2}$ since $\eps$ is pointwise
orthogonal to $g$. In two dimensions, $\ee_0$ is spanned by
\beq
\eps_1=dx_1^2-dx_2^2,\qquad \eps_2=2dx_1dx_2.
\eeq
So (\ref{injd}) is equivalent to 
\bea
\ip{\phi^*h,\eps_1}_{L^2}&=&\int_M\left(\left|\d\phi\frac{\cd\: }{\cd x_1}\right|^2
-\left|\d\phi\frac{\cd\: }{\cd x_2}\right|^2\right)\vol_g=0\\
\label{injdbs}
\ip{\phi^*h,\eps_2}_{L^2}&=&\int_M h(\d\phi\frac{\cd\: }{\cd x_1},
\d\phi\frac{\cd\: }{\cd x_2})\vol_g=0.
\eea
Note that these conditions, like the virial constraint, are easy to check numerically.
If the latter two conditions hold pointwise, rather than just as
integrals, this means precisely that the mapping $\phi$ is (weakly) conformal.
So a natural way to describe the condition $S\perp_{L^2}\ee_0$ is that the
mapping $\phi$ must be {\em conformal on average} (in the $L^2$ sense).
We can reformulate this condition using the symmetric bilinear form $\Delta$,
defined in (\ref{arireb}). In this case, $-2S_0=\phi^*h$, so $\phi$ is
conformal on average if and only if
\beq
\Delta=\int_M\phi^*h\vol_g=\lambda g
\eeq
where $\lambda$ is some constant. Taking the trace of both sides, one sees that $\lambda=E_2$.
 In abbreviated notation, then, we
may express the conditions for $\phi$ to be a soliton lattice as
\beq
E_0=E_4,\qquad \Delta=\int_M\phi^*h\vol_g=E_2g.
\eeq
This result was already derived, via a slightly different argument, in
\cite{jayspesut}.

So, given an $E$ minimizer $\phi$ 
which is conformal on average and satisfies $E_0=E_4$,
we know that $E$ is critical with respect to variations of the period lattice.
To check if, in addition, it is {\em stable} with respect to variations of the 
period lattice, we need to show that its hessian is positive definite. In
fact this second step turns out to be redundant, as we now show.

\begin{prop}\label{latimpcry}
 Let $\phi$ be a baby skyrmion lattice. Then its hessian is
positive definite. Hence every baby skyrmion lattice is a soliton crystal.
\end{prop}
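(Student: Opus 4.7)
The plan is to apply Proposition \ref{hessprop} with $\wh\eps=\eps$, so that the hessian collapses to
$\hess(\eps,\eps)=\ip{\dot S,\eps}_{L^2}$, and then to compute $\dot S$ directly from the explicit formula (\ref{bsS}). Take a curve of metrics $g_s$ in $g+\ee$ with $\cd_s|_{s=0}g_s=\eps$. Only three pieces of (\ref{bsS}) feel the metric: the inverse metric inside $|d\phi|^2_g$, the factor $|\phi^*\omega|^2_g$, and the overall $g$ multiplying the scalar density; the bilinear form $\phi^*h$ and the potential $U^2$ are metric-free. Using $\dot{g^{ij}}=-\eps_{ij}$ and the two-dimensional identity $|\phi^*\omega|^2_g\vol_g=$ (metric-free 2-form) which gives $\dot{|\phi^*\omega|^2_g}=-|\phi^*\omega|^2\tr\eps$, a routine calculation yields
$$
\dot S=\tfrac14\bigl(-\ip{\eps,\phi^*h}+|\phi^*\omega|^2\tr\eps\bigr)g+\tfrac14 D\,\eps,\qquad D:=|d\phi|^2-|\phi^*\omega|^2+U^2.
$$

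Pairing with $\eps$ and integrating gives
$$
\hess(\eps,\eps)=\tfrac14\int_M\Bigl[\bigl(-\ip{\eps,\phi^*h}+|\phi^*\omega|^2\tr\eps\bigr)\tr\eps+D\,\|\eps\|^2\Bigr]\vol_g.
$$
Next I decompose $\eps=cg+\eps_0$ with $\eps_0\in\ee_0$ and $c=\tfrac12\tr\eps$ constant, and show the cross terms drop out. Expanding produces a term $-2c\int_M\ip{\eps_0,\phi^*h}\vol_g$, and this vanishes because $\phi$ is a lattice: condition (\ref{injd}) says $S\perp_{L^2}\ee_0$, and since $S_0=-\tfrac12(\phi^*h)_0$ in two dimensions, we also have $\phi^*h\perp_{L^2}\ee_0$.

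With the cross term gone, the remaining pure-trace and pure-traceless pieces can be simplified using the virial constraint (\ref{ijdbs}), $E_0=E_4$. The trace piece collapses to $c^2(E_4+E_0)=2c^2E_0$, while $\int_M D\vol_g=2(E_2-E_4+E_0)=2E_2$ reduces the traceless piece to $\tfrac12\|\eps_0\|^2 E_2$. Hence
$$
\hess(\eps,\eps)=2c^2E_0+\tfrac12\|\eps_0\|^2 E_2,
$$
a manifestly non-negative quadratic form on $\ee\cong\R g\oplus\ee_0$.

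To finish, I must check strict positivity. A topologically nontrivial minimizer has $E_2>0$ (otherwise $d\phi\equiv0$ and $\phi$ is constant), and also $E_4>0$: if $E_4=0$ then $\phi^*\omega\equiv0$ a.e., contradicting the fact that $\int_M\phi^*\omega$ computes (a nonzero multiple of) the topological degree of $\phi$; the virial then gives $E_0=E_4>0$. Both coefficients are therefore strictly positive and $\hess$ is positive definite. The only real obstacle in executing this plan is bookkeeping in the $\dot S$ computation, in particular getting the sign in $\dot{|\phi^*\omega|^2_g}$ right; once $\dot S$ is in hand, the structure of the argument — cross term killed by the lattice condition, diagonal blocks made positive by the virial constraint and topological nontriviality — is essentially forced.
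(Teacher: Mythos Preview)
Your argument is correct and follows essentially the same route as the paper's: both compute $\dot S$ from (\ref{bsS}), use Proposition \ref{hessprop}, and exploit the lattice conditions to obtain the block-diagonal form $\hess(\eps,\eps)=2c^2E_4+\tfrac12\|\eps_0\|^2E_2$ (the paper establishes block diagonality first and then treats the $g$ and $\ee_0$ blocks separately, whereas you compute the full quadratic form at once and then decompose, but the content is the same). Your explicit justification that $E_2,E_4>0$ via topological nontriviality is a point the paper leaves implicit.
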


\begin{proof} 
We are given that $S(\phi)$ is $L^2$ orthogonal to $\ee$.
Let $g_{s,t}$ be a two-parameter variation of $g=g_{0,0}$ in $g+\ee$, and
$\wh\eps,\eps,g_s,\dot{S}$ be as defined as in Proposition \ref{hessprop}.
From (\ref{bsS}) we see that
\beq
\dot{S}=\lambda g+\frac14(|\d\phi|^2-|\phi^*\omega|^2+U(\phi)^2)\wh\eps.
\eeq
Hence, if $\wh\eps=g$ and $\eps\in\ee_0$ then $\dot{S}=\lambda g$, so
\beq
\hess(g,\eps)=\ip{\lambda g,\eps}_{L^2}-2\ip{g,S\cdot\eps}_{L^2}
=0-2\ip{S\cdot g,\eps}_{L^2}
=-2\ip{S,\eps}_{L^2}=0
\eeq
since $\phi$ is a soliton lattice. Hence it suffices to show that
$\hess(g,g)>0$ and $\hess(\eps,\eps)>0$ for all $\eps\in\ee_0\less\{0\}$.

Now
\beq
\hess(g,g)=\left.\frac{d^2\:}{dt^2}\right|_{t=1}E(\phi,g_t)
\eeq
where $g_t=tg$. Clearly
\beq
E_0(\phi,g_t)=tE_0(\phi,g),\qquad
E_2(\phi,g_t)=E_2(\phi,g),\qquad
E_4(\phi,g_t)=t^{-1}E_4(\phi,g)
\eeq
so
\beq
\hess(g,g)=2E_4>0.
\eeq

Finally, if $\wh\eps=\eps\in\ee_0\less\{0\}$ then, by 
Proposition \ref{hessprop},
\bea
\hess(\eps,\eps)&=&\ip{\dot S,\eps}_{L^2}
=\ip{\lambda g+\frac14(|\d\phi|^2-|\phi^*\omega|^2+U(\phi)^2)\eps,\eps}_{L^2}\nonumber \\
&=&\frac14\ip{\eps,\eps}_\ee\int_M(|d\phi|^2-|\phi^*\omega|^2+U(\phi)^2)\vol_g\nonumber \\
&=&\frac12 E_2\ip{\wh\eps,\eps}_\ee>0
\eea
since $E_0=E_4$.
\end{proof}

Hence, rather remarkably, for baby Skyrme models every soliton lattice is a soliton crystal.
It follows that the baby skyrmion lattices found in \cite{jayspesut}, for example,
for which the conditions (\ref{ijdbs}), (\ref{injdbs}) were checked numerically, are
soliton crystals according to our definition. These were defined on equianharmonic
tori, that is tori with $\Lambda$ generated by $L$ and $Le^{i\pi/3}$. Another remarkable
feature of baby Skyrme models is that {\em every} torus, no matter how bizarre its period
lattice, supports a soliton crystal for an appropriate choice of potential function $U$.
We next outline the construction of this special potential.

We specialize to the case where $N=S^2$, the 
unit sphere with its usual metric and complex structure. There is a well-known
topological energy bound on $E_2(\phi)$ due to Lichnerowicz \cite{lic}
\beq
E_2(\phi)\geq 2\pi \deg(\phi)
\eeq
where $\deg(\phi)$ denotes the degree of the map $\phi:M\ra S^2$. Equality holds
if and only if $\phi$ is holomorphic. Less well-known is that there is also
a topological lower energy bound on $E_0+E_4$ \cite{spe-semicompactons}
\beq
E_0(\phi)+E_4(\phi)\geq 4\pi\ip{U}\deg(\phi)
\eeq
where $\ip{U}$ denotes the average value of the function $U:S^2\ra\R$, with 
equality if and only if
\beq\label{bog}
\phi^*\omega=*U\circ\phi.
\eeq
Let us choose and fix a torus $M=\R^2/\Lambda$ and use
a stereographic coordinate 
\beq
W=\frac{\phi_1+i\phi_2}{1-\phi_3}
\eeq
on $S^2$ and a complex coordinate $z=x_1+ix_2$ on $M$. Then there is a
degree 2 holomorphic map $\phi_\wp:M\ra S^2$ defined by taking
$W(z)=\wp(z)$, the Weierstrass p-function corresponding to lattice $\Lambda$.
Then $W(z)$ satisfies the ordinary differential equation
\beq
W'(z)^2=4W(z)^3-c_2W(z)-c_3
\eeq
where $c_2,c_3$ are constants, depending on $\Lambda$, which are more 
conventionally denoted $g_2,g_3$ \cite[p.~159]{law}. 
Since $\phi_\wp$ is holomorphic,
\beq
\phi_\wp^*\omega=
\frac{4|W'(z)|^2}{(1+|W|^2)}\:\frac{i}{2}\d z\wedge\d\ol{z}.
\eeq
Hence, if we choose the potential function so that
\beq
U(W)=\frac{4|4W^3-c_2W-c_3|}{(1+|W|^2)^2}
\eeq
then the field $\phi_\wp$ 
is holomorphic and satisfies (\ref{bog}) and so
simultaneously minimizes both $E_2$ and $E_0+E_4$ among all degree 2 maps
$M\ra S^2$. Hence, this field is an $E$ minimizer. It follows immediately
from (\ref{bog}) that $E_0=E_4$, and since $\phi_\wp$ is holomorphic, it is weakly 
conformal, and hence certainly conformal on average. Hence $\phi_\wp$
is a soliton lattice and further, by Proposition \ref{latimpcry}, a soliton 
crystal. Note that this construction, which generalizes in obvious fashion
an observation of Ward \cite{war} in the case of a square lattice, works no matter what choice of period
lattice $\Lambda$ we start with, although, of course, the potential
function $U$ depends on $\Lambda$. In all cases, the potential
$V=\frac12 U^2$ has four vacua, located at the critical values of the
p-function associated with $\Lambda$. One of these is always the North pole
(corresponding to $W=\infty$) but the other three, located at roots of the
polynomial $4W^3-c_2W-c_3$ move around as $\Lambda$ is varied. Conversely,
given any choice of four distinct points on $S^2$, one can rotate them so that one 
lies at $(0,0,1)$, then construct a degree
two elliptic function with critical values at precisely those (rotated) points. 
This 
function determines a period lattice $\Lambda$, and a potential function $U$
such that the field $\phi_\wp$ is a soliton crystal for the model with potential
$U$. Of course, these potentials are specially constructed to support exact
soliton crystals, but there are other examples of four-vacuum potentials
outside this class which are known numerically, in light of 
Proposition \ref{latimpcry}, to support soliton crystals with topological 
charge per unit cell equal to two. In all known cases, the period lattice has 
the same geometry (up to scale) as that predicted by the p-function with
critical values at the vacua (though only exceptionally symmetric cases where
the corresponding p-function has square or equianharmonic period lattice
 have been studied). It would be interesting to see whether this is
a general phenomenon, by conducting a thorough numerical analysis along the
lines of Karliner and Hen's \cite{henkar}.

As we have seen, for a given fixed period lattice $\Lambda$, we can reverse engineer a potential $V(\phi)=\frac12U(\phi)^2$ so that the
baby Skyrme model with target space $S^2$ has a smooth (in fact, holomorphic)  energy minimizer of degree 2 on the torus $M=\R^2/\Lambda$.
But what about degree classes $\deg(\phi)\neq 2$, or potentials other than this very special choice? Existence of minimizers of the
baby Skyrme energy on compact domains has not been rigorously studied previously, and is not entirely trivial; for example,
it is known that no degree $\pm1$ minimizer exists on any torus for the pure sigma model (with the potential and Skyrme terms absent). 
Existence of minimizers on $\R^2$ with degree $\pm 1$ for
the potential $V(\phi)=\lambda(1-\phi_3)^2$ with $\lambda>0$ sufficiently small has been established via the
concentration-compactness method by Lin and Yang in \cite{linyan-bs}.  
Their analysis suggests the essential estimate required for our purposes. 

Choose and fix a period lattice $\Lambda$ and let $M=\R^2/\Lambda$. Denote by $L^2$ the space of square integrable functions 
$M\ra\R^3$ and by $H^1$ the subspace of $L^2$ consisting of maps whose first partial derivatives are also in $L^2$. These are Hilbert spaces
with respect to the usual inner products
\beq
\ip{\phi,\psi}_{L^2}=\int_M\phi\cdot\psi,\qquad
\ip{\phi,\psi}_{H^1}=\ip{\phi,\psi}_{L^2}+\ip{\phi_x,\psi_x}_{L^2}+\ip{\phi_y,\psi_y}_{L^2}.
\eeq
For each fixed $k\in\Z$ we define
\beq
X_k=\{\phi\in H^1\: : \: \mbox{$|\phi|=1$ almost everywhere, $\deg(\phi)=k$}\}
\eeq
where 
\beq
\deg(\phi)=\frac{1}{4\pi}\int_M\phi\cdot(\phi_x\times\phi_y).
\eeq
The result below rests on three standard theorems of functional analysis: Alaoglu's Theorem \cite[p.~125]{mor}
(every bounded sequence in a reflexive Banach space, for 
example, a Hilbert space, has a weakly convergent subsequence), Rellich's Lemma \cite[p.~144]{ada} 
(the inclusion $H^1(M)\hra L^2(M)$ is compact for compact $M$) and
Tonelli's Theorem \cite[p.~22]{eva} ($f\mapsto \int_M F(f)$ is sequentially weakly lower semicontinuous on $L^2$ if $F:\R^m\ra\R$ is convex). 

\begin{thm}\label{ooo-analysis}
 Let $V:S^2\ra\R$ be any function admitting a convex extension to $\R^3$. Then, for any $k\in\Z$ the functional
$$
E:X_k\ra [0,\infty],\qquad E(\phi)=\int_M\left(\frac12|\phi_x|^2+\frac12|\phi_y|^2+\frac12|\phi_x\times\phi_y|^2+V(\phi)\right),
$$
attains a minimum.
\end{thm}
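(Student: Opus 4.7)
The plan is to apply the direct method of the calculus of variations. Take a minimizing sequence $\phi_n\in X_k$, so $E(\phi_n)\to\inf_{X_k}E$. Since $V$ has a convex extension to $\R^3$, it is continuous, hence bounded on $S^2$; replacing $V$ by $V-\min_{S^2}V$ (which preserves convexity) we may assume $V\geq 0$, so $E\geq 0$. The constraint $|\phi_n|=1$ gives $\|\phi_n\|_{L^2}^2=\vol(M)$, while $\|\phi_{n,x}\|_{L^2}^2+\|\phi_{n,y}\|_{L^2}^2\leq 2E(\phi_n)$ is bounded, so $\{\phi_n\}$ is bounded in $H^1$. By Alaoglu's theorem, after extracting a subsequence, $\phi_n\wra\phi$ weakly in $H^1$; by Rellich's lemma, $\phi_n\to\phi$ strongly in $L^2$, and a further extraction gives a.e.\ convergence. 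In particular $|\phi|=1$ a.e.

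The main obstacle is to verify that $\deg(\phi)=k$: the degree density $\phi\cdot(\phi_x\times\phi_y)$ is trilinear in $\phi$, $\phi_x$, $\phi_y$ and is not continuous under weak $H^1$ convergence alone. The Skyrme term $\tfrac12\|\phi_{n,x}\times\phi_{n,y}\|_{L^2}^2$ is bounded by $E(\phi_n)$, so after a further extraction $\phi_{n,x}\times\phi_{n,y}\wra J$ weakly in $L^2$ for some $J\in L^2$. I would identify $J=\phi_x\times\phi_y$ via the distributional identity
$$
2\,\phi_{n,x}\times\phi_{n,y}=\cd_x(\phi_n\times\phi_{n,y})-\cd_y(\phi_n\times\phi_{n,x}),
$$
whose second-derivative terms cancel. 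Strong $L^2$ convergence of $\phi_n$ paired with weak $L^2$ convergence of $\phi_{n,x}$, $\phi_{n,y}$ yields $\phi_n\times\phi_{n,y}\to\phi\times\phi_y$ and $\phi_n\times\phi_{n,x}\to\phi\times\phi_x$ in the sense of distributions (tested against $\psi\in C_c^\infty(M)$, $\psi\phi_n\to\psi\phi$ strongly in $L^2$ against the weakly convergent derivative). Taking distributional derivatives and comparing with the $L^2$ limit forces $J=\phi_x\times\phi_y$. Then
$$
\int_M\phi_n\cdot(\phi_{n,x}\times\phi_{n,y})=\int_M(\phi_n-\phi)\cdot(\phi_{n,x}\times\phi_{n,y})+\int_M\phi\cdot(\phi_{n,x}\times\phi_{n,y}),
$$
where the first integral is bounded by $\|\phi_n-\phi\|_{L^2}\|\phi_{n,x}\times\phi_{n,y}\|_{L^2}\to 0$, and the second converges to $\int_M\phi\cdot(\phi_x\times\phi_y)$ by pairing the fixed bounded function $\phi\in L^\infty\subset L^2$ against the weakly convergent sequence. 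Hence $\deg(\phi_n)\to\deg(\phi)$ and so $\phi\in X_k$.

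It remains to establish sequential weak lower semicontinuity of $E$, which I would do term by term. Weak $L^2$ lower semicontinuity of $\|\cdot\|_{L^2}^2$ applied to $\phi_{n,x}\wra\phi_x$, $\phi_{n,y}\wra\phi_y$ handles the Dirichlet term $E_2$; the same principle applied to $\phi_{n,x}\times\phi_{n,y}\wra \phi_x\times\phi_y$ handles the Skyrme term $E_4$; and Tonelli's theorem (using the convex extension of $V$ to $\R^3$), combined with the weak $L^2$ convergence $\phi_n\wra\phi$, handles $E_0=\int_M V(\phi)$ (equivalently, a.e.\ convergence together with $V\geq 0$ and Fatou's lemma suffices). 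Summing these three inequalities gives $E(\phi)\leq\liminf_n E(\phi_n)=\inf_{X_k}E$, so $\phi$ attains the minimum.
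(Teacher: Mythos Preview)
Your argument follows the same direct-method strategy as the paper: minimizing sequence, $H^1$ boundedness, weak $H^1$ and strong $L^2$ compactness via Alaoglu and Rellich, verification that the limit lies in $X_k$, and weak lower semicontinuity. The one substantive difference is in identifying the weak $L^2$ limit of $\phi_{n,x}\times\phi_{n,y}$. The paper extracts a weakly convergent subsequence and then asserts, ``by uniqueness of weak limits, its limit must be $\phi_x\times\phi_y$'', without further justification. You supply that justification via the div-curl identity $2\,\phi_{n,x}\times\phi_{n,y}=\partial_x(\phi_n\times\phi_{n,y})-\partial_y(\phi_n\times\phi_{n,x})$ together with strong-$L^2$-times-weak-$L^2$ convergence of each bilinear factor, which pins down the distributional (hence weak $L^2$) limit. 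This also lets you treat lower semicontinuity of the Skyrme term directly, by applying weak lsc of the $L^2$ norm to $\phi_{n,x}\times\phi_{n,y}\rightharpoonup\phi_x\times\phi_y$, rather than invoking Tonelli for the full integrand as the paper does; since $(a,b)\mapsto|a\times b|^2$ is not jointly convex on $\R^3\times\R^3$, the paper's blanket appeal to Tonelli is, strictly speaking, incomplete for that term, and your route is the cleaner one.
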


\begin{proof} Let $\phi_n\in X_k$ be a minimizing sequence for $E$, that is, $E(\phi_n)\ra \inf_{\psi\in X_k}E(\psi)$. We will
repeatedly extract nested subsequences from $\phi_n$, still denoted $\phi_n$, with various convergence properties. We will
denote strong convergence by $\ra$ and weak convergence by $\wra$, the space concerned being explicitly specified. 

Since $M$ is compact and $|\phi_n|=1$ almost everywhere, $\|\phi_n\|_{H^1}^2\leq 2E(\phi_n)+{\rm Vol}(M)$, is bounded. Hence, by Alaoglu's Theorem,
there is a subsequence $\phi_n$ and $\phi\in H^1$ such that $\phi_n\stackrel{H^1}{\wra}\phi$. Again, since $M$ is compact, the inclusion
$\iota:H^1\hra L^2$ is compact by Rellich's Lemma, so $\phi_n$, a bounded sequence in $H^1$,
 has a subsequence converging strongly in $L^2$, and hence weakly in $L^2$. By
uniqueness of (weak) limits, its limit must be $\phi$, that is $\phi_n\stackrel{L^2}{\ra}\phi$. Now $|\phi_n|=1$ 
 almost everywhere, so on any open set $\Omega\subset M$, $\int_\Omega(1-|\phi|^2)=\lim\int_\Omega(1-|\phi_n|^2)=0$, so $|\phi|=1$ almost everywhere also.
Hence, we can replace $V$ in the formula for $E$ by its convex extension to $\R^3$. Then every term in the integrand of $E$ is convex, so, 
by Tonelli's Theorem, $E:H^1\ra[0,\infty]$ is a sum of sequentially weakly lower semicontinuous functionals, and hence is itself
sequentially weakly lower semicontinuous. Hence $E(\phi)\leq \lim E(\phi_n)=\inf_{\psi\in X_k}E(\psi)$. It remains to show that $\phi\in X_k$,
that is, $\deg(\phi)=k$.

Both $\phi$ and (since $E(\phi)$ is finite) $\phi_x\times\phi_y$ are in $L^2$, so $\deg(\phi)$ exists. Furthermore,
\bea
4\pi|\deg(\phi)-k|&=&4\pi|\deg(\phi)-\deg(\phi_n)|
=\left|\int_M\phi\cdot(\phi_x\times\phi_y)-\phi_n\cdot(\cd_x\phi_n\times\cd_y\phi_n)\right|\nonumber \\
&\leq&|\ip{\phi,\phi_x\times\phi_y-\cd_x\phi_n\times\cd_y\phi_n}_{L^2}|+|\ip{\phi-\phi_n,\cd_x\phi_n\times\cd_y\phi_n}_{L^2}|\nonumber \\
&\leq&|\ip{\phi,\phi_x\times\phi_y-\cd_x\phi_n\times\cd_y\phi_n}_{L^2}|+\|\phi-\phi_n\|_{L^2}\|\cd_x\phi_n\times\cd_y\phi_n\|_{L^2}
\eea
Now $\|\cd_x\phi_n\times\cd_y\phi_n\|_{L^2}$ is bounded (by $2E(\phi)+1$, for example) and hence, by Alaoglu's Theorem, $\cd_x\phi_n\times\cd_y\phi_n$ has
a subsequence converging weakly in $L^2$. By uniqueness of weak limits, its limit must be $\phi_x\times\phi_y$, so 
$\phi_x\times\phi_y-\cd_x\phi_n\times\cd_y\phi_n\stackrel{L^2}{\wra}0$, whence $\ip{\phi,\phi_x\times\phi_y-\cd_x\phi_n\times\cd_y\phi_n}_{L^2}\ra 0$.
Further, $\phi_n\stackrel{L^2}{\ra}\phi$ and $\|\cd_x\phi_n\times\cd_y\phi_n\|_{L^2}$ is bounded, so 
$\|\phi-\phi_n\|_{L^2}\|\cd_x\phi_n\times\cd_y\phi_n\|_{L^2}\ra 0$ also. Hence $\deg(\phi)=k$.
\end{proof}

The requirement that the potential $V:S^2\ra\R$ have a convex extension to $\R^3$ looks, at first sight, annoyingly restrictive.
In fact, {\em every} function
$S^2\ra\R$ has a convex extension to $\R^3$, provided it is sufficiently smooth, so this is practically no restriction at all:

\begin{prop}\label{convexity} Let $f:S^n\ra\R$ be twice continuously differentiable. Then $f$ has a convex extension $F:\R^{n+1}\ra\R$.
\end{prop}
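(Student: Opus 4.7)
The plan is to write $F(x)=\wt{f}(x)+\lambda(|x|^{2}-1)$, where $\wt f$ is a compactly supported $C^{2}$ extension of $f$ to $\R^{n+1}$ and $\lambda>0$ is chosen large enough that the quadratic correction dominates the Hessian of $\wt f$ everywhere. The compactness of $S^{n}$ is the key point: it lets me choose an extension whose support is compact, whose Hessian is therefore uniformly bounded below, so that a single scalar quadratic correction suffices to make the total function convex on all of $\R^{n+1}$.

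For the extension I would fix a smooth cutoff $\chi:[0,\infty)\ra[0,1]$ equal to $1$ in a neighbourhood of $r=1$ and vanishing outside $[1/2,2]$, and set
\beq
\wt f(x)=\chi(|x|)\,f(x/|x|)\quad(x\neq 0),\qquad \wt f(0)=0.
\eeq
Since $x\mapsto x/|x|$ is $C^{\infty}$ on $\R^{n+1}\less\{0\}$ and $f$ is $C^{2}$ on $S^{n}$, the composition $f(x/|x|)$ is $C^{2}$ on $\R^{n+1}\less\{0\}$; multiplying by $\chi(|x|)$, which vanishes in a neighbourhood of the origin and outside a compact set, yields a $C^{2}$ function of compact support on $\R^{n+1}$. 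By construction $\wt f|_{S^{n}}=f$, and $\hess(\wt f)$ is continuous and compactly supported, so there exists $M\geq 0$ such that the smallest eigenvalue of $\hess(\wt f)(x)$ is at least $-M$ for every $x\in\R^{n+1}$.

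Finally, set $F(x)=\wt f(x)+\lambda(|x|^{2}-1)$ for any $\lambda\geq M/2$. Then $F|_{S^{n}}=f$, and the smallest eigenvalue of $\hess(F)=\hess(\wt f)+2\lambda I$ is everywhere at least $2\lambda-M\geq 0$, so $F$ is convex. The only substantive step is producing a globally $C^{2}$ extension with a bounded Hessian; this is where compactness of $S^{n}$ enters, and once it is in place the quadratic correction makes convexity automatic.
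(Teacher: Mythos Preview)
Your argument is correct. The cutoff guarantees that $\wt f$ vanishes identically on $\{|x|<\tfrac12\}$, so the only potential singularity (at the origin) is harmless, $\wt f\in C^{2}_{c}(\R^{n+1})$, and hence $\hess(\wt f)$ has a global lower eigenvalue bound; adding $\lambda(|x|^{2}-1)$ with $\lambda$ large then yields a genuine $C^{2}$ convex extension.

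The paper takes a different, more explicit, route: it sets $F(x)=|x|^{2}\bigl(f(x/|x|)+C\bigr)-C$ for a sufficiently large constant $C$, and verifies convexity by restricting $F$ to every affine line in $\R^{n+1}$ and computing $(F\circ\alpha)''$ directly, using the arc-length reparametrisation of the radial projection of $\alpha$ onto a geodesic of $S^{n}$. Note that the paper's $F$ can be rewritten as $|x|^{2}f(x/|x|)+C(|x|^{2}-1)$, so both constructions are of the form ``some extension of $f$ plus a big quadratic''; the difference is that the paper tames the singularity at $0$ by the factor $|x|^{2}$ rather than a cutoff. Since $|x|^{2}f(x/|x|)$ is not compactly supported, its Hessian need not be globally bounded below, and the paper must do the line-by-line estimate by hand. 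Your cutoff trades this explicit calculation for a soft compactness argument, which is cleaner and avoids any computation; the paper's version, on the other hand, gives a completely explicit closed-form extension (no auxiliary cutoff), which may be preferable if one later wants quantitative control of $F$.
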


\begin{proof} Since $f$ is $C^2$ and $S^n$ is compact,
\beq
C_*=\sup\{|(f\circ\gamma)(t)|,|(f\circ\gamma)'(t)|,|(f\circ\gamma)''(t)|\}
\eeq
is finite, where the supremum is over all unit speed geodesics $\gamma:[0,2\pi]\ra S^n$ and all $t\in[0,2\pi]$. 
Choose any $C>2C_*$ and consider the function 
\beq
F:\R^{n+1}\ra\R, \qquad F(x)=|x|^2(f(x/|x|)+C)-C,
\eeq
 which clearly extends $f$. We claim $F$
is convex. To show this it suffices to show that its restriction to any arc-length parametrized straight line in $\R^{n+1}$ is convex. This is clear
for all straight lines through $0$, and for the straight line $\alpha(t)=a+tb$, $a\neq 0$, $|b|=1$, $a\cdot b=0$, one sees, by radially projecting 
$\alpha$ to a geodesic arc on $S^n$,
\beq
\gamma(\theta(t))=\frac{\alpha(t)}{|\alpha(t)|},\qquad \theta(t)=\tan^{-1}\frac{t}{|a|}
\eeq
that
\bea
(F\circ\alpha)''(t)&=&2(f\circ\gamma(\theta(t))+C)+4t(f\circ\gamma)'(\theta(t))\theta'(t)\nonumber\\
&&\quad+
(t^2+|a|^2)(f\circ\gamma)''(\theta(t))\theta'(t)^2
+(t^2+|a|^2)(f\circ\gamma)'(\theta(t))\theta''(t)\nonumber\\
&\geq&\frac{1}{t^2+|a|^2}\left\{2(C-C_*)(t^2+|a|^2)-2t|a|C_*-|a|^2C_*\right\}\nonumber\\
&\geq&\frac{C_*}{t^2+|a|^2}\left\{(t-|a|)^2+t^2\right\}
\geq 0.
\eea
\end{proof}

Combining Theorem \ref{ooo-analysis} and Proposition \ref{convexity}, we see that for any $C^2$ potential $V:S^2\ra\R$ (including $V=0$), any
degree $k$, and any period lattice $\Lambda$, there is a minimizer of $E$ among all $H^1$ maps $\R^2/\Lambda\ra S^2$ of degree $k$. This minimizer is
sufficiently regular for all the integral criteria developed in this section for criticality, and stability, with respect to variations
of $\Lambda$ for this fixed minimizer $\phi$ (that is, the virial constraint, and conformality on average) to be rigorously well-defined. 
There is no reason to expect $H^1$ regularity of the minimizer $\phi$ to be optimal: one would hope, for smooth $V$, that elliptic regularity 
methods would, with some effort, yield smoothness of $\phi$. Since variation of $\phi$ is outside the main focus of the present paper, and enhanced 
regularity is not necessary for our purposes, we do not pursue this further here.

\section{Skyrmion crystals}
\news
\label{sec:skyrme}

In this section we consider the three dimensional Skyrme model. This has a
single scalar field $\phi:M\ra G$, where $(M,g)$ is an oriented riemannian three 
manifold (in our case, $M=\R^3/\Lambda$, a torus), and $G$ is a compact simple 
Lie group (usually taken to be $SU(2)$) whose Lie algebra we denote
$\g$. The energy functional is conventionally\footnote{Actually, Manton and Sutcliffe
take the Skyrme energy to be $E'=\frac{1}{12\pi^2}(2E_2+\frac12 E_4)$ but this can 
be reduced to $E_2+E_4$ by rescaling length and energy units \cite[p.~350]{mansut}.}
the sum of two terms
\beq
E_2=\frac12\int_M|\d\phi|^2\vol_g,\qquad E_4=\frac12\int_M|\phi^*\Omega|^2\vol_g
\eeq
where $\Omega$ is a $\g$-valued two-form on $G$ defined as follows. Let
$\mu\in\Omega^1(G)\otimes\g$ be the left Maurer-Cartan form, that is, the
$\g$-valued one-form on $G$ which associates to any $X\in T_xG$ the value 
at the identity element $e\in G$ of the left invariant vector field on $G$
whose value at $x$ is $X$. Then, for any $X,Y\in T_xG$,
\beq
\Omega(X,Y):=[\mu(X),\mu(Y)].
\eeq
So $\phi^*\Omega\in\Omega^2(M)\otimes\g$, and its norm in the expression for $E_4$
is taken with respect to $g$ and some natural choice of $Ad(G)$ invariant
inner product on $\g$ (for example $\ip{X,Y}_\g=\frac12\tr(X^\dagger Y)$ in the case
of most interest, $G=SU(2)$, giving $G$ the metric of the unit 3-sphere). 
To be explicit, given any local orthonormal frame $e_1,e_2,e_3$ of vector fields
on $M$, then
\beq
|\phi^*\Omega|^2=|\phi^*\Omega(e_1,e_2)|_\g^2+|\phi^*\Omega(e_2,e_3)|_\g^2+
|\phi^*\Omega(e_3,e_1)|_\g^2.
\eeq
One can also allow for the presence of
potential and sextic terms
\beq
E_0=\frac12\int_M U(\phi)^2\vol_g,\qquad
E_6=\frac12\int_M |\phi^*\Xi|^2\vol_g
\eeq
where $U:G\ra\R$ is some potential function and $\Xi\in\Omega^3(G)$ is
some natural three-form on $G$, for example,
\beq
\Xi(X,Y,Z)=\ip{\mu(X),\Omega(Y,Z)}_\g
\eeq
which coincides, in the case $G=SU(2)$, with the volume form on $G$. Such terms
have aroused considerable interest recently because they offer the hope
of constructing so-called ``near-BPS'' Skyrme models with drastically reduced
nuclear binding energies, which addresses a fundamental phenomenological problem
with the usual Skyrme model \cite{adasanwer}. We shall begin our analysis of the model with all
these terms present
\beq
E=E_0+E_2+E_4+E_6
\eeq
before restricting to the usual case by choosing $U=0$, $\Xi=0$. Existence of $H^1$ minimizers in every degree  class on an arbitrary compact domain
for $E=E_2+E_4$, $G=SU(2)$, was established by Kapitanski \cite{kap-compact_skyrme}. A similar result with $E_6$ included (with or without $E_0$)
follows from Proposition \ref{convexity} and the obvious modification of Theorem \ref{ooo-analysis} (the proof of which made no essential use of the
dimension of $M$). Once again, the established regularity is not thought to be optimal, but is sufficient to make the integral criteria below rigorously
well-defined.

For our purposes, the key field theoretic object is the stress tensor of a field
$\phi$. To write this down neatly, we need to generalize slightly the
contraction map $A\cdot B$ for bilinear forms, introduced in section
\ref{sec:var}. So let $A,B$ be $\g$-valued bilinear forms on $M$ (for example,
$\phi^*\Omega$) and $e_1,e_2,e_3$ be a local orthonormal frame of vector
fields on $M$. Then by $A\cdot B$ we will mean the (real valued) bilinear
form
\beq
(A\cdot B)(X,Y)=\sum_i\ip{A(X,e_i),B(e_i,Y)}_\g.
\eeq
With this convention, we have:

\begin{prop} The stress tensor of a Skyrme field $\phi:M\ra G$ with respect to
the energy $E=E_0+E_2+E_4+E_6$ is
$$
S(\phi,g)=\frac14\left(|d\phi|^2+|\phi^*\Omega|^2-|\phi^*\Xi|^2+U(\phi)^2\right)g
-\frac12\left(\phi^*h-\phi^*\Omega\cdot\phi^*\Omega\right).
$$
\end{prop}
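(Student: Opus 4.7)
The plan is to compute the first variation $\partial_t|_{t=0} E(\phi,g_t)$ term by term for an arbitrary variation $g_t$ of $g$ with initial velocity $\eps\in\Gamma(T^*M\odot T^*M)$, and to read off $S$ from the defining relation $\partial_t|_{t=0} E(\phi,g_t)=\ip{S,\eps}_{L^2}$. The two basic ingredients are already available from the proof of Proposition \ref{hessprop}: the volume-form variation $\partial_t\vol_{g_t}|_{t=0}=\frac12\ip{g,\eps}\vol_g$ and the inverse-metric variation $\partial_t g_t^{ij}|_{t=0}=-g^{ip}\eps_{pq}g^{qj}$. I would treat the four pieces $E_0,E_2,E_4,E_6$ separately and then sum the contributions.

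For $E_0$, the integrand $U(\phi)^2$ is metric-independent, so only the volume variation contributes, giving $\frac14 U(\phi)^2 g$. For $E_2$, since $|\d\phi|^2=g^{ij}(\phi^*h)_{ij}$ contains a single inverse metric, a direct computation gives $\partial_t|\d\phi|^2|_{t=0}=-\ip{\eps,\phi^*h}$ pointwise, and combining with the volume variation yields the contribution $\frac14|\d\phi|^2 g - \frac12\phi^*h$, matching the claimed formula.

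For $E_4$, the $\mathrm{Ad}(G)$-invariant inner product on $\g$ is metric-independent, so only the two factors of $g^{-1}$ in $|\phi^*\Omega|^2=\frac12 g^{ik}g^{jl}\ip{(\phi^*\Omega)_{ij},(\phi^*\Omega)_{kl}}_\g$ vary. The coordinate calculation parallels the one carried out inside the proof of Proposition \ref{hessprop} for symmetric $(0,2)$-tensors, but the antisymmetry of $\phi^*\Omega$ flips one sign when the resulting bilinear form is identified with $\phi^*\Omega\cdot\phi^*\Omega$, giving $\partial_t|\phi^*\Omega|^2|_{t=0}=+\ip{\eps,\phi^*\Omega\cdot\phi^*\Omega}$. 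Together with the volume variation, $E_4$ contributes $\frac14|\phi^*\Omega|^2 g + \frac12\phi^*\Omega\cdot\phi^*\Omega$.

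For $E_6$, one must handle the fact that $\phi^*\Xi$ is a top-degree form on the three-manifold $M$. Writing $\phi^*\Xi=f_0\vol_g$, the $3$-form itself is metric-independent while the coefficient expressing it with respect to $\vol_{g_t}$ scales as $\vol_g/\vol_{g_t}$. Hence $|\phi^*\Xi|^2_{g_t}\vol_{g_t}=f_0^2(\vol_g)^2/\vol_{g_t}$, whose derivative at $t=0$ is $-\frac12|\phi^*\Xi|^2\ip{g,\eps}\vol_g$, producing the contribution $-\frac14|\phi^*\Xi|^2 g$, with sign opposite to that of $E_0$ on account of the inverse scaling of a top-degree density. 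Summing the four contributions reproduces the stated formula. The main obstacle is sign bookkeeping, in particular the sign flip in $E_4$ caused by the antisymmetry of $\phi^*\Omega$ and the opposite-sign $E_6$ contribution arising from the top-form dependence; both are routine but easy to get wrong.
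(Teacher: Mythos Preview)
Your proposal is correct and follows essentially the same route as the paper: compute $\partial_t|_{t=0}E(\phi,g_t)$ using the volume-form and inverse-metric variations, and read off $S$. The only difference is scope: the paper cites \cite{jayspesut} for the $E_0$, $E_2$ and $E_6$ contributions and only carries out the $E_4$ calculation in detail, whereas you work through all four pieces explicitly. Your treatment of $E_4$ matches the paper's coordinate computation, and your remark that the antisymmetry of $\phi^*\Omega$ flips the sign (so that one obtains $+\ip{\eps,\phi^*\Omega\cdot\phi^*\Omega}$ rather than the $-2\ip{A\cdot B,\eps}$ formula valid for symmetric tensors) is exactly the point; likewise your top-form argument for $E_6$ is the standard one.
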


\begin{proof}
Let $g_t$ be a smooth variation of $g=g_0$ and $\eps=\cd_t|_{t=0}g_t$. 
The terms coming from $E_0+E_2+E_6$ (that is, all but the second and last terms in the formula
above) were obtained previously \cite{jayspesut}. It remains to show that
\beq
\left.\frac{d\:}{d t}\right|_{t=0}E_4(\phi,g_t)=\ip{\eps,\frac14|\phi^*\Omega|^2g+\frac12\phi^*\Omega\cdot\phi^*\Omega}_{L^2}.
\eeq
Let us employ the abbreviation $\Omega_{ij}=\phi^*\Omega(\cd/\cd x_i,\cd/\cd x_j)$, and the Einstein summation
convention. Then
\beq
|\phi^*\Omega|_g^2=\frac12\ip{\Omega_{ij},\Omega_{kl}}_\g g^{ik}g^{jl}
\eeq
and hence
\beq
\left.\frac{d\: }{dt}\right|_{t=0}|\phi^*\Omega|_{g_t}^2=\eps_{pq}g^{qj}\ip{\Omega_{ji},g^{il}\Omega_{lk}}_\g g^{kp}=\ip{\eps,\phi^*\Omega\cdot\phi^*\Omega}.
\eeq
It follows that
\bea
\left.\frac{d\:}{d t}\right|_{t=0}E_4(\phi,g_t)&=&\left.\frac{d\:}{d t}\right|_{t=0}\frac12\int_M|\phi^*\Omega|_{g_t}^2\vol_{g_t}\nonumber \\
&=&\frac12 \int_M\left(\ip{\eps,\phi^*\Omega\cdot\phi^*\Omega}\vol_{g}+|\phi^*\Omega|^2\left.\frac{d\: }{dt}\right|_{t=0}\vol_{g_t}\right)\nonumber \\
&=&\frac12 \int_M\left(\ip{\eps,\phi^*\Omega\cdot\phi^*\Omega}\vol_{g}+\frac12|\phi^*\Omega|^2\ip{\eps,g}\vol_g\right)
\eea
as required, by (\ref{sopstr}).
\end{proof}

We want to use this formula to extract explicit conditions which $\phi$ must
satisfy if it is to be a skyrmion lattice. Recall that this means precisely that
$S$ is $L^2$ orthogonal to the space $\ee$ of parallel symmetric bilinear forms
on $M$. We expect these conditions to consist of a virial constraint, similar
to (\ref{ijdbs}), and some analogue of the ``conformal on average'' condition
(\ref{injdbs}). To formulate the latter condition in the Skyrme context, we
define $\Delta\in\ee$ as in (\ref{arireb}). That is,
we choose $x\in M$ and define $\Delta:T_xM\times T_xM\ra\R$ by
\beq\label{madkha}
\Delta(X,Y)=\int_M\left(\phi^*h(X,Y)-(\phi^*\Omega\cdot\phi^*\Omega)(X,Y)\right)\vol_g
\eeq
where $X,Y$ on the right are the unique parallel extensions of $X,Y$ over $M$.
We then identify $\Delta$ with an element of $\ee$ using the canonical
isomorphism $\ee\ra T_x^*M\odot T_x^*M$ defined by evaluation. 

\begin{prop}\label{sampen}
 $\phi:M=\R^3/\Lambda\ra G$ is a skyrmion lattice if and only if
$$
(E_2-E_4)+3(E_0-E_6)=0,\qquad\mbox{ and }\qquad\Delta=\frac23(E_2+2E_4)g.
$$
\end{prop}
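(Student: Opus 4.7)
The plan is to unpack Definition \ref{ariel} in the form used at the end of Section \ref{sec:var}: $\phi$ is a skyrmion lattice iff $S\perp_{L^2}\ee$, which decomposes into the virial condition $\ip{S,g}_{L^2}=0$ of (\ref{ijd}) and the ``conformal on average'' condition $S\perp_{L^2}\ee_0$ of (\ref{injd}). These will be turned directly into the two stated equations, using the formula for $S$ from the preceding Proposition and working in a local orthonormal frame $e_1,e_2,e_3$.

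For the virial condition, I will compute $\ip{S,g}_{L^2}$ term by term. In dimension three $\ip{g,g}=3$, so the pure-trace part of $S$ contributes $\tfrac34\int_M(|d\phi|^2+|\phi^*\Omega|^2-|\phi^*\Xi|^2+U^2)=\tfrac32(E_0+E_2+E_4-E_6)$. For the remaining $-\tfrac12(\phi^*h-\phi^*\Omega\cdot\phi^*\Omega)$ piece, expanding in the orthonormal frame gives $\ip{g,\phi^*h}=\sum_i\phi^*h(e_i,e_i)=|d\phi|^2$, while the key computation is
\[
\ip{g,\phi^*\Omega\cdot\phi^*\Omega}=\sum_{i,k}\ip{\phi^*\Omega(e_i,e_k),\phi^*\Omega(e_k,e_i)}_\g=-\sum_{i,k}|\phi^*\Omega(e_i,e_k)|_\g^2=-2|\phi^*\Omega|^2,
\]
where the sign comes from antisymmetry of $\Omega$. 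Collecting, $\ip{S,g}_{L^2}=\tfrac32(E_0+E_2+E_4-E_6)-E_2-2E_4=\tfrac12\bigl[(E_2-E_4)+3(E_0-E_6)\bigr]$, which vanishes precisely when the first stated equation holds.

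For the traceless condition, note that the pure-trace part of $S$ is automatically $L^2$-orthogonal to $\ee_0$, so $S\perp_{L^2}\ee_0$ is equivalent to $\phi^*h-\phi^*\Omega\cdot\phi^*\Omega$ being orthogonal to $\ee_0$ in $L^2$. Since the bilinear form $\Delta$ defined in (\ref{madkha}) is the element of $\ee$ obtained by integrating $\phi^*h-\phi^*\Omega\cdot\phi^*\Omega$ against parallel-propagated vectors, this orthogonality is equivalent to $\Delta\in\ee$ being orthogonal to $\ee_0$ with respect to $\ip{\cdot,\cdot}_\ee$, hence to $\Delta=\lambda g$ for some constant $\lambda$. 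To identify $\lambda$, I take the trace and re-use the same two frame computations: $\mathrm{tr}\,\Delta=\int_M(|d\phi|^2+2|\phi^*\Omega|^2)\vol_g=2E_2+4E_4$, while $\mathrm{tr}(\lambda g)=3\lambda$, yielding $\lambda=\tfrac23(E_2+2E_4)$.

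Combining the two equivalences gives the ``iff'' statement of Proposition \ref{sampen}. The calculation is essentially bookkeeping; the only nontrivial step is recognizing that the antisymmetry of the $\g$-valued two-form $\phi^*\Omega$ converts the frame trace of $\phi^*\Omega\cdot\phi^*\Omega$ into $-2|\phi^*\Omega|^2$, which is what produces the coefficients $3$ and $2$ appearing in the statement.
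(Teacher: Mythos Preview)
Your proof is correct and follows essentially the same route as the paper: split $S\perp_{L^2}\ee$ into the virial condition $\ip{S,g}_{L^2}=0$ and the traceless condition $S\perp_{L^2}\ee_0$, use the frame identities $\ip{g,\phi^*h}=|d\phi|^2$ and $\ip{g,\phi^*\Omega\cdot\phi^*\Omega}=-2|\phi^*\Omega|^2$ to reduce the first to the stated virial constraint, and then read off $\lambda$ by tracing $\Delta=\lambda g$. The only cosmetic difference is that the paper packages the two frame identities as the single line (\ref{nausch}) before invoking them.
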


\begin{proof} To analyze the condition $S\perp_{L^2} g$, we note that, for any symmetric
bilinear form $A$, $\ip{A,g}=\tr A$ pointwise,  and
\beq\label{nausch}
\tr\phi^*h=|\d\phi|^2\qquad\mbox{ and }\qquad \tr\phi^*\Omega\cdot\phi^*\Omega=-2|\phi^*\Omega|^2.
\eeq
Hence
\bea
\ip{S,g}_{L^2}&=&\int_M\left\{
\frac34(|\d\phi|^2+|\phi^*\Omega|^2-|\phi^*\Xi|^2+U(\phi)^2)-
\frac12(|\d\phi|^2+2|\phi^*\Omega|^2)
\right\}\vol_g\nonumber\\
&=&\frac12(E_2-E_4-3E_6+3E_0)
\eea
which establishes the virial constraint. We have already noted that
$S\perp_{L^2}\ee_0$ if and only if $\Delta=\lambda g$ for some constant $\lambda$. 
Taking the trace of both sides and using (\ref{nausch}) again, one finds that
$3\lambda=2E_2+4E_4$.
\end{proof}

Of course, we could have deduced the virial constraint directly from a Derrick
scaling argument, but it is reassuring to see that it follows from our formula
for $S$. 

It would be convenient to have the analogue of Proposition \ref{latimpcry}
for skyrmion lattices, that is, a proof that every skyrmion lattice has
positive hessian. While this is certainly plausible, we have been
unable to prove it because the Skyrme stress tensor lacks a fundamental
simplifying property enjoyed by the baby Skyrme stress tensor. Namely, in the
baby Skyrme case, the derivative of $S$ with respect to $g$ in the
direction of $g$ is itself parallel to $g$, $\dot{S}=\lambda g$, from which it immediately
follows that $\hess$ is block diagonal with respect to the decomposition
$\ee=\ip{g}\oplus\ee_0$. Hence, it suffices to show that $\hess(g,g)>0$ and
$\hess(\eps,\eps)>0$ for all $\eps\in\ee_0\less\{0\}$. 
In the Skyrme case, however, 
$\cd_t|_{t=1} S(\phi,tg)=\lambda g-\frac12\phi^*\Omega\cdot\phi^*\Omega$ (the extra term
coming from the $g$ dependence in the contraction defining 
$\phi^*\Omega\cdot\phi^*\Omega$), and this is not necessarily $L^2$ orthogonal
to $\ee_0$. Hence, the hessian is not block diagonal in general, and no such simplification
occurs.
This difficulty is absent in the special case of Skyrme models for which the
quartic term is absent, that is, with energy $E=E_0+E_2+E_6$, and in this case, 
the analogue of Proposition \ref{latimpcry} does hold (the proof being
essentially unchanged). As far as we are  aware, however, all numerical
studies of Skyrme crystals have addressed the conventional Skyrme model, with
energy $E=E_2+E_4$, so to check whether the solutions found therein are crystals
according to our definition, one must independently check both the lattice 
conditions (given by Proposition \ref{sampen}) and positivity of the hessian.

So, for the rest of this section, we specialize to the usual Skyrme model,
with target space $G=SU(2)$ and energy $E=E_2+E_4$ by setting both
$U$ and $\Xi$ to $0$. Further, we choose $\ip{X,Y}_\g=\frac12\tr(X^\dagger Y)$.
The problem of minimizing $E$ on a cubic torus $T^3_L=\R^3/L\Z^3$ among all Skyrme fields of degree 4 has been studied numerically by
Castillejo {\it et al} 
\cite{casjonjacverjac} and Kugler and Shtrikman \cite{kugsht}. These studies minimized $E$ for fixed side length $L$, and then varied $L$,
independently finding an energy minimum at $L\approx 18.8$ (in our units). The toric 4-skyrmion so obtained is usually called the ``Skyrme crystal''.
Assuming such a minimum exists at this value of $L$, it must satisfy the virial constraint $E_2=E_4$ (since this follows from minimality with respect to 
dilations of $g$). The numerical studies further suggest \cite[p.~383]{mansut}
that the Skyrme crystal is equivariant with respect
to a certain subgroup $K$ of the isometry group of $T^3_L$, which we now 
describe.

\ignore{There is a well-known spatially periodic solution $\phi$ of this model, on a cubic
torus $T^3_L=\R^3/L\Z^3$, found by Castillejo {\it et al} 
\cite{casjonjacverjac} and Kugler and Shtrikman \cite{kugsht}, and usually
called the ``Skyrme crystal.'' Since these numerical studies minimized energy
over the side-length $L$ of the torus (finding a minimum at $L\approx 18.8$ 
in our
units), it is known {\it a priori} that this solution satisfies the virial
constraint $E_2=E_4$. It is also known \cite[p.~383]{mansut}
that $\phi$ is equivariant with respect
to a certain subgroup $K$ of the isometry group of $T^3_L$, which we now 
describe.}

Consider the linear maps $s_i:\R^3\ra\R^3$,
\beq
s_1(x_1,x_2,x_3)=(-x_1,x_2,x_3),\quad
s_2(x_1,x_2,x_3)=(x_2,x_3,x_1),\quad
s_3(x_1,x_2,x_3)=(x_1,x_3,-x_2).
\eeq
Clearly these are isometries of $\R^3$ and preserve any cubic lattice $\Lambda=L\Z^3$, and so generate a subgroup $K$ of $O(3)$ acting
isometrically on $T^3_L$ on the left. This group also acts (isometrically) on $G$ on the left, as follows. We may
identify $G=SU(2)$ with the unit sphere $S^3\subset\R^4$ by means of the correspondence
\beq
(y_0,y_1,y_2,y_3)\leftrightarrow\left(\begin{array}{cc}y_0+iy_2&y_3+iy_1\\
-y_3+iy_1&y_0-iy_2\end{array}\right).
\eeq
This allows us to define an isometric left action of $O(3)$ on $G$ by ${\cal O}:(y_0,\yvec)\mapsto (y_0,{\cal O}\yvec)$, and hence an isometric
left action of $K$ on $G$. The Skyrme crystal is known to be $K$-equivariant with respect to these actions.\footnote{
We have used a slightly non-standard embedding $G\hookrightarrow \R^4$ to define the $K$ action on $G$. This is to ensure that
the Skyrme crystal is $K$-equivariant in the usual sense. Alternatively, we could use the usual embedding, and analyze the Skyrme ``anticrystal'',
$\wt\phi=P\circ\phi$ where $P:\R^4\ra\R^4$ is the map $(y_0,y_1,y_2,y_3)\mapsto (y_0,y_1,y_3,y_2)$.}
We will now show that $K$-equivariance and the virial constraint alone ensure that an $E$ minimizer 
is a soliton crystal according to our definition.
It follows that, assuming the Skyrme crystal exists with the symmetries claimed, it is likewise a soliton crystal
according to our definition.

\begin{prop} Let $\phi:T^3_L\ra SU(2)$ be a $K$-equivariant minimizer of the Skyrme energy $E=E_2+E_4$ which satisfies the virial
constraint $E_2=E_4$. Then $\phi$ is a soliton crystal. 
\end{prop}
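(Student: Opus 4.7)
The two clauses of Definition \ref{ariel}---soliton lattice and positive hessian---will be verified in turn, each by exploiting the $K$-invariance afforded by Propositions \ref{sym2} and \ref{sym3}.

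For the lattice condition I apply Proposition \ref{sampen} with $U=0$ and $\Xi=0$. The virial half reduces to $E_2=E_4$, which is the hypothesis. The remaining condition is $\Delta\in\mathbb{R}g$. By Proposition \ref{sym3}, $\Delta$ is $K$-invariant. Realising $\ee$ as the space of symmetric $3\times 3$ matrices via evaluation at a point, a direct check on the generators $s_1,s_2,s_3$ of $K$ shows that the only $K$-fixed matrices are scalar multiples of the identity: $s_1$-invariance kills all off-diagonals touching the first axis, and the cyclic permutation $s_2$ then kills the remaining off-diagonals and equates the three diagonal entries. Hence $\Delta=\lambda g$ for some $\lambda\in\mathbb{R}$ and $\phi$ is a soliton lattice.

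For positivity of the hessian, the crux is the $K$-invariant decomposition
\[
\ee \;=\; \mathbb{R}g \,\oplus\, E_{\mathrm{diag}} \,\oplus\, T_{\mathrm{off}},
\]
where $E_{\mathrm{diag}}$ is the $2$-dimensional space of traceless diagonal forms and $T_{\mathrm{off}}$ the $3$-dimensional space of symmetric forms with zero diagonal. Using $s_1$ and $s_2$ one checks that these summands are three pairwise non-isomorphic irreducible real $K$-modules (dimensions alone distinguish them). By Proposition \ref{sym2}, $\hess$ is a $K$-invariant symmetric bilinear form on $\ee$, as is the canonical inner product $\ip{\cdot,\cdot}_\ee$. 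A standard Schur-type argument therefore forces $\hess$ to respect the decomposition and to equal a scalar multiple of $\ip{\cdot,\cdot}_\ee$ on each summand. Positive definiteness thus reduces to verifying $\hess(\eps,\eps)>0$ for one nonzero representative $\eps$ in each summand.

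These three checks are direct scaling computations. For $\eps=g$ the isotropic rescaling $g_t=(1+t)g$ gives $E(\phi,g_t)=(1+t)^{1/2}E_2+(1+t)^{-1/2}E_4$, whose second derivative at $t=0$ equals $\tfrac{1}{2}E_2>0$ under the virial. For $\eps=\mathrm{diag}(2,-1,-1)\in E_{\mathrm{diag}}$ the metric $g_t=g+t\eps$ is diagonal, so the standard coordinate formulae for $E_2$ and $E_4$ apply directly; the cyclic $s_2$-equivariance makes the three integrals $\int|\partial_i\phi|^2$ equal and likewise the three $\int|\Omega_{ij}|^2$, and a short Taylor expansion yields a positive second derivative. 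The awkward case is $\eps=e_{23}+e_{32}\in T_{\mathrm{off}}$: the metric $g_t$ is no longer diagonal, but the orthonormal change of coordinates $y_1=x_1$, $y_{2,3}=(x_2\pm x_3)/\sqrt{2}$ diagonalises it to $\mathrm{diag}(1,1+t,1-t)$; translating derivatives back to the $x$-frame produces first-order cross terms $\int\ip{\partial_{x_2}\phi,\partial_{x_3}\phi}_h$ and $\int\ip{\Omega_{12},\Omega_{13}}_\g$ which are precisely the off-diagonal components of $\Delta$, hence vanish by the lattice condition just proved, while the remaining second-order terms collapse under cyclic symmetry to a positive combination of $E_2$ and $E_4$. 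This last step---where the rotation trick and the previously established lattice condition must conspire---is the only substantive technical point.
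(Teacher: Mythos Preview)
Your proof is correct and follows the same architecture as the paper's: use $K$-invariance to force $\Delta\in\R g$, decompose $\ee=\ip{g}\oplus\D_0\oplus\D^\perp$ (your $E_{\mathrm{diag}}$ and $T_{\mathrm{off}}$), reduce positivity of $\hess$ to one test vector per block, and compute. The differences are in implementation rather than strategy.

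First, where the paper computes $\dim\ee^K=1$ and $\dim(\ee^*\odot\ee^*)^K=3$ via a character-table calculation relegated to an appendix, you instead verify $\ee^K=\R g$ by hand and obtain the block structure of $\hess$ from irreducibility of the three summands plus a Schur/eigenspace argument. Your route is more elementary and self-contained; the paper's is more systematic and would scale better to other symmetry groups. Both are valid.

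Second, for the three hessian evaluations the paper invokes Proposition~\ref{hessprop}, $\hess(\eps,\eps)=\ip{\dot S,\eps}_{L^2}$, and reads off each value from the explicit form of $\dot S$. This is considerably cleaner than your direct second-derivative expansions, particularly for the $E_4$ term; you should be aware that the machinery is already available and saves work.

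One small correction to your account of the off-diagonal case: the cross integrals $\int\ip{\partial_{x_2}\phi,\partial_{x_3}\phi}_h$ and $\int\ip{\Omega_{12},\Omega_{13}}_\g$ that you flag are \emph{first}-order in $t$; they constitute the first variation, whose vanishing is exactly the lattice condition and is a prerequisite for $\hess$ to make sense, not an ingredient in its value. In the second-order expansion these terms carry an odd coefficient in $t$ (or, in your diagonalised $y$-frame, the coefficients of $|\partial_{y_2}\phi|^2$ and $|\partial_{y_3}\phi|^2$ agree at second order), so they simply do not appear in $\hess(\eps,\eps)$. No ``conspiracy'' is required: the second derivative reduces directly, under cyclic symmetry, to $\tfrac13E_2+E_4>0$, matching the paper's $\|\Omega_{23}\|_{L^2}^2+\|\Omega_{31}\|_{L^2}^2$.
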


\begin{proof} We are given that $\phi$ satisfies the virial constraint, so it suffices to show that $\Delta=\lambda g$, for some
constant $\lambda$ (where $\Delta\in\ee$ is defined in (\ref{madkha})), and that $\hess$ is positive. It is clear that
$E(\phi,g)$ is geometrically natural and $K$-invariant. Hence, by Proposition \ref{sym3}, $\Delta$ is an element of
\beq
\ee^K=\{\eps\in\ee\: :\: \forall k\in K, \: k^*\eps=\eps\},
\eeq
the fixed-point space of $\ee$ under the action of $K$, and by Proposition \ref{sym2}, $\hess$ is
an element of 
\beq
(\ee^*\odot\ee^*)^K=\{H\in\ee^*\odot\ee^*\: :\: \forall k\in K,\: (k^*)^*H=H\}
\eeq 
the fixed-point space of $\ee^*\odot\ee^*$ under the action of $K$.
A representation theoretic argument, presented in the appendix, shows that $\ee^K$ has dimension $1$ and $(\ee^*\odot\ee^*)^K$ has
dimension $3$. Clearly $g\in\ee$ is $K$-invariant, so $\ee^K$ is spanned by $g$, and it follows that $\Delta=\lambda g$
for some $\lambda$. Hence $\phi$ is a soliton lattice. It remains to show that $\hess$ is positive, and for this we introduce a
basis for $(\ee^*\odot\ee^*)^K$ as follows. 

First note that the $K$ action on $\ee$ leaves the three-dimensional subspace of $\ee$
consisting of diagonal symmetric bilinear forms
\beq
\D=\{a_1dx_1^2+a_2dx_2^2+a_3dx_3^2\: :\:  a_1,a_2,a_3\in\R^3\}
\eeq
invariant. It also leaves the line spanned by $g$, a subspace of $\D$, invariant, and preserves the inner product on $\ee$, and hence
leaves the orthogonal complement of $g$ in $\D$  
\beq
\D_0=\{a_1dx_1^2+a_2dx_2^2+a_3dx_3^2\: :\:  a_1,a_2,a_3\in\R^3,\: a_1+a_2+a_3=0\}
\eeq
and the orthogonal complement of $\D$ in $\ee$
\beq
\D^\perp=\{2a_1dx_1dx_2+2a_1dx_1dx_3+2a_3dx_2dx_3\: :\: a_1,a_2,a_3\in\R\}
\eeq
invariant.  Let us denote by $\Pr_g$, $\Pr_{\D_0}$ and $\Pr_{\D^\perp}$ the orthogonal projectors $\ee\ra\ip{g}$, $\ee\ra\D_0$ and
$\ee\ra\D^\perp$, and by $H_1,H_2,H_3:\ee\times\ee\ra\R$ the symmetric bilinear forms
\bea
H_1(\wh\eps,\eps)&=&\ip{\Pr_g(\wh\eps),\Pr_g(\eps)}_\ee,\nonumber \\
H_2(\wh\eps,\eps)&=&\ip{\Pr_{\D_0}(\wh\eps),\Pr_{\D_0}(\eps)}_\ee,\nonumber\\
\label{flablaski}
H_3(\wh\eps,\eps)&=&\ip{\Pr_{\D^\perp}(\wh\eps),\Pr_{\D^\perp}(\eps)}_\ee.
\eea
These are linearly independent and, by construction, $K$-invariant, and hence form a basis for $(\ee^*\odot\ee^*)^K$.
Hence
\beq
\hess=c_1H_1+c_2H_2+c_3H_3
\eeq
for some constants $c_1,c_2,c_3$, and $\hess$ is positive if and only if these constants are positive. 

Now 
\bea
3c_1&=&\hess(g,g)=\left.\frac{d^2\: }{dt^2}\right|_{t=1}E(\phi,tg)
=\left.\frac{d^2\: }{dt^2}\right|_{t=1}(t^{\frac12}E_2(\phi,g)+t^{-\frac12}E_4(\phi,g))\nonumber \\
&=&\frac14E_2+\frac34E_4=E_2>0.
\eea
To compute $c_2$ and $c_3$, it is convenient to use again the
abbreviation
$\Omega_{ij}:=\phi^*\Omega(\cd/\cd x_i,\cd/\cd x_j)$. Let $g_t$ be a generating curve in $\ee$ for $\eps$.
Then
\bea
\dot S&=&\left.\frac{d\: }{dt}\right|_{t=0}S(\phi,g_t)=\lambda g+\frac14(|\d\phi|^2+|\phi^*\Omega|^2)\eps+\frac12\left.\frac{d\: }{dt}\right|_{t=0}
(\phi^*\Omega)\cdot_{g_t}(\phi^*\Omega)\nonumber \\
&=&\lambda g+\frac14(|\d\phi|^2+|\phi^*\Omega|^2)\eps
-\frac12\sum_{i,j,k,l}\ip{\Omega_{ik},\eps_{kl}\Omega_{lj}}_\g\, dx_idx_j.
\eea
In the case $\eps=dx_1^2-dx_2^2\in\D_0$, one sees that
\beq
\dot S=\frac14(|\d\phi|^2+|\phi^*\Omega|^2)\eps-\frac12|\Omega_{12}|^2\eps+A
\eeq
where $\ip{A,\eps}_\ee=0$. Hence, by Proposition \ref{hessprop},
\bea
2c_2&=&\hess(dx_1^2-dx_2^2,dx_1^2-dx_2^2)
=\ip{\dot S,\eps}_{L^2}
=E_2+E_4-\int_M|\Omega_{12}|^2\vol_g\nonumber \\
&=&2E_4-\int_M|\Omega_{12}|^2\vol_g
=\|\Omega_{23}\|_{L^2}^2+\|\Omega_{31}\|_{L^2}^2>0.
\eea
Similarly, in the case $\eps=2dx_1dx_2\in\D^\perp$,
\beq
\dot S=\frac14(|\d\phi|^2+|\phi^*\Omega|^2)\eps-\frac12|\Omega_{12}|^2\eps+A'
\eeq
where $\ip{A',\eps}_\ee=0$ and so, by identical reasoning, 
\beq
2c_3=\hess(2dx_1dx_2,2dx_1dx_2)=\|\Omega_{23}\|_{L^2}^2+\|\Omega_{31}\|_{L^2}^2>0.
\eeq
\end{proof}

Other periodic Skyrme solutions on cubic tori have been found numerically, and can be analyzed in similar fashion.
For example, Klebanov \cite{kle} found a solution which is equivariant with respect to the
subgroup $K'<K$ generated by $s_1,s_2$ only. Again, he minimized over period length, so that the virial constraint 
must hold.
It turns out that $\ee^{K'}=\ee^K$ and $(\ee^*\odot\ee^*)^{K'}=(\ee^*\odot\ee^*)^{K}$,
so exactly the same argument given above shows that this toric soliton is also a soliton crystal.

It is slightly surprising that, in the course of the proof above, we showed that $c_2=c_3$, so that the hessian
actually has the simple form
\beq
\hess(\wh\eps,\eps)=c_1\ip{\Pr_g(\wh\eps),\Pr_g(\eps)}_\ee+c_2\ip{\Pr_{\ee_0}(\wh\eps),\Pr_{\ee_0}(\eps)}_\ee.
\eeq
This does not follow immediately from $K$-equivariance alone, but seems to rely on the detailed structure of the Skyrme
energy, so should not be expected as a generic property of soliton lattices on cubic tori. Note that the symmetry $\phi\circ s_2=s_2\circ\phi$
implies 
\beq
\|\Omega_{12}\|_{L^2}^2=\|\Omega_{23}\|_{L^2}^2=\|\Omega_{31}\|_{L^2}^2=\frac23 E_4=\frac23 E_2
=\frac13E,
\eeq
 so that $K$-equivariance
(or $K'$-equivariance) actually implies
\beq
\hess(\wh\eps,\eps)=\ignore{\frac13E_2(\ip{\Pr_g(\wh\eps),\Pr_g(\eps)}_\ee+2\ip{\Pr_{\ee_0}(\wh\eps),\Pr_{\ee_0}(\eps)}_\ee)
=}
\frac13E\left(\ip{\wh\eps,\eps}_\ee-\frac16\ip{g,\wh\eps}_\ee\ip{g,\eps}_\ee\right)
\eeq
for the standard Skyrme model. It would be interesting to see whether useful information about the vibrational modes of
Skyrme crystals can be extracted from this formula.

\section{Concluding remarks}
\news
\label{sec:conc}
We have derived necessary conditions for a soliton on a torus $M=\R^m/\Lambda$ to be a soliton crystal. The
stress tensor $S$ of the soliton must be $L^2$ orthogonal to $\ee$, the space of parallel symmetric bilinear
forms on $TM$ and, further, a certain symmetric bilinear form on $\ee$, called the hessian, must be positive.
We have shown that, for baby Skyrme models, the first condition actually implies the second. We have also shown that,
for any choice of lattice $\Lambda$, there is a baby Skyrme model which supports a soliton crystal of periodicity $\Lambda$.
For the three-dimensional Skyrme model, we showed that a soliton solution on a cubic lattice which
satisfies the virial constraint $E_2=E_4$ and is equivariant with respect to (a subgroup of) the lattice
symmetries automatically satisfies both tests. This verifies in particular that the ``Skyrme
crystal'' of Castillejo {\it et al.}, and Kugler and Shtrikman, passes both tests. 
Note that, although we have applied the criteria only to local minimizers of $E(\phi)$, they could equally well be applied to saddle points (unstable
static solutions), and there is no obvious reason why a saddle point for variations of $\phi$ should not be a local minimum for variations of $\Lambda$.
Indeed, {\em if} saddle points of the baby Skyrme energy exist which are critical for variations of $\Lambda$, the proof of Proposition \ref{latimpcry} 
implies that they can {\em only} be local minima (with respect to variations of $\Lambda$). The physical significance of such saddle points, if, indeed, they
exist at all, is not clear.

It would be straightforward to extend the analysis to deal with gauge theories on tori. In this context, $\phi$ would be a section of
some vector bundle $V$ over $M$ with connexion $\nabla$. The linear diffeomorphisms used to identify
all tori with $M$ can be used to identify the bundles $V$, the section $\phi$, and the connexion $\nabla$, by pullback, so that, once again, the variation
over period lattices is reformulated as a variation over metrics. One expects the first and second variation formulae to be structurally identical
to those presented in section \ref{sec:var}, therefore.

A more interesting extension would be to apply the idea to soliton sheets and chains, that is, solitons on $M=\R^m\times T^{m'}$. Presumably
the first variation formula will, once again, amount to the condition that the soliton satisfies all generalized Derrick constraints on $M$. Now, however,
$\frac12m(m+1)$ of these conditions will be known {\it a priori} for free (or $\frac12m(m+1)+1$ if the energy has been minimized over torus volume, as is usual
in numerical studies) because they involve deformations only of the $\R^m$ factor, which are already accounted for in the variation of $\phi$. This leaves
$\frac12m'(m'+1)+mm'$ nontrivial constraints. Equivariance with respect to lattice symmetries is likely to be considerably less constraining than it is for
tori so that equivariance alone is unlikely to guarantee criticality. It would be interesting to see whether the skyrmion sheets found numerically in
\cite{batsut-hex} and \cite{silwar} survive the analysis.

\appendix
\section*{Appendix: the symmetry group $K$ and its action on $\ee$}

\news
\renewcommand{\theequation}{A.\arabic{equation}}

Recall that the Skyrme crystal is equivariant with respect to a discrete subgroup $K<O(3)$ generated by the matrices
\beq
s_1=\left(\begin{array}{ccc}-1&0&0\\0&1&0\\0&0&1\end{array}\right),\quad 
s_2=\left(\begin{array}{ccc}0&1&0\\0&0&1\\1&0&0\end{array}\right),\quad 
s_3=\left(\begin{array}{ccc}1&0&0\\0&0&1\\0&-1&0\end{array}\right).
\eeq
These generate a nonabelian group of order
$48$,  consisting of all permutation matrices where each nonzero entry can be either $1$ or $-1$.
 There is an induced isometric action of $K$ on
$(\ee,\ip{\cdot,\cdot}_\ee)$ by pullback, 
\beq
k^*\eps(X,Y)=\eps(k(X),k(Y))
\eeq
 and, further, an induced action of $K$ on $\ee^*\odot\ee^*$ by pullback of the pullback,
\beq
(k^*)^*H(\wh\eps,\eps)=H(k^*\wh\eps,k^*\eps).
\eeq
We wish to compute the dimensions of their fixed point spaces $\ee^K$, $(\ee^*\odot\ee^*)^K$, or equivalently, to count
the number of copies of the trivial representation in the decomposition of the $K$ representations on $\ee$ and $\ee^*\odot\ee^*$
into irreducible orthogonal representations. This
we can do by using character orthogonality. Our task, therefore, boils down to the construction of character tables for these two representations.

Recall that characters are constant on conjugacy classes. There is an obvious eight-to-one
``forgetful'' homomorphism $\mu:K\ra S_3$, which sends each signed permutation matrix to the permutation
matrix obtained by changing all $-1$ entries to $+1$. Clearly, if $k,k'\in K$ are conjugate in $K$, so are
$\mu(k),\mu(k')$ is $S_3$, so each conjugacy class $[k]$ in $K$ carries a label $[\mu(k)]$, a conjugacy class in $S_3$.
There are three such classes, consisting of permutations which fix 3, 1 or 0 elements, the classes of
$e$, $(23)$ and $(132)$ respectively. Conjugate elements in $K$ also have equal trace and determinant. 
So if $k$ is conjugate to $k'$, $([\mu(k)],\det k, \tr k)=([\mu(k')],\det k', \tr k')$ and straightforward calculation
shows that the converse also holds: if $([\mu(k)],\det k,\tr k)=([\mu(k')],\det k', \tr k')$ then $k$ is conjugate to $k'$.
Hence, each conjugacy class is uniquely labelled by the triple $([\mu(k)],\det k, \tr k)$. From this we deduce that $K$ splits
into 10 conjugacy classes, as specified in table \ref{table1}. The final label, $\tr k$, is also the character $\chi^{\R^3}(k)$ of the
fundamental representation of $K$. From this, we can deduce the character of the induced representation on $\ee$, the space of
symmetric bilinear forms on $\R^3$, using the standard formula \cite{jon}
\beq\label{heauga}
\chi^\ee(k)=\frac12\left[\chi^{\R^3}(k)^2+\chi^{\R^3}(k^2)\right].
\eeq
For this purpose we need to know $[k^2]$, the conjugacy class of $k^2$, for a representative $k$ of each class. This information is recorded
in column 6 of table \ref{table1}, and suffices to compute $\chi^\ee(k)$ for each class, yielding column 7.\, To compute the number of copies of the trivial 
representation of $K$ in $\chi^\ee$, we compute the character inner product between  $\chi^\ee$ and $\chi^{triv}$ (where $\chi^{triv}(k)=1$ for all $k$):
\beq
\ip{\chi^\ee,\chi^{triv}}=\frac{1}{|K|}\sum_{k\in K}\chi^\ee(k)\chi^{triv}(k)=
\frac{1}{|K|}\sum_{[k]\in K/\sim}|[k]|\chi^\ee([k])\times 1=1
\eeq
where the second sum is over conjugacy classes. Hence, $\ee^K$ is one-dimensional. 

\begin{table}
$$
\begin{array}{c|ccccccc}
\begin{array}{c}\mbox{Representative}\\k\end{array}& |[k]| & [\mu(k)] & \det k & \tr k=\chi^{\R^3}(k) & [k^2] & \chi^\ee(k) & \chi^{\ee^*\odot\ee^*}(k) \\ \hline
\I_3 & 1 & [e] & 1 & 3 & [\I_3] & 6 & 21 \\
-\I_3 & 1 & [e] & -1 & -3 & [\I_3] & 6 & 21 \\
\downsize{\left(\begin{array}{ccc}-1&\o&\o\\\o&1&\o\\\o&\o&1\end{array}\right)} & 3 & [e] & -1 & 1 & [\I_3]& 2 & 5 \\
\downsize{\left(\begin{array}{ccc}-1&\o&\o\\\o&-1&\o\\\o&\o&1\end{array}\right)} & 3 & [e] & 1 & -1 & [\I_3]& 2 & 5 \\
\downsize{\left(\begin{array}{ccc}1&\o&\o\\\o&\o&1\\\o&1&\o\end{array}\right)} & 6 & [(2,3)] & -1 & 1 & [\I_3]& 2 & 5 \\
\downsize{\left(\begin{array}{ccc}-1&\o&\o\\\o&\o&-1\\\o&-1&\o\end{array}\right)} & 6 & [(2,3)] & 1 & -1 & [\I_3]& 2 & 5 \\
\downsize{\left(\begin{array}{ccc}1&\o&\o\\\o&\o&-1\\\o&1&\o\end{array}\right)} & 6 & [(2,3)] & 1 & 1 & \downsize{\left[\left(\begin{array}{ccc}-1&\o&\o\\\o&-1&\o\\\o&\o&1\end{array}\right)\right]}& 0 & 1 \\
\downsize{\left(\begin{array}{ccc}-1&\o&\o\\\o&\o&1\\\o&-1&\o\end{array}\right)} & 6 & [(2,3)] & -1 & -1 & \downsize{\left[\left(\begin{array}{ccc}-1&\o&\o\\\o&-1&\o\\\o&\o&1\end{array}\right)\right]}& 0 & 1 \\
\downsize{\left(\begin{array}{ccc}\o&1&\o\\\o&\o&1\\1&\o&\o\end{array}\right)} & 8 & [(1,3,2)] & 1 & 0 & \downsize{\left[\left(\begin{array}{ccc}\o&1&\o\\\o&\o&1\\1&\o&\o\end{array}\right)\right]}& 0 & 0 \\
\downsize{\left(\begin{array}{ccc}\o&-1&\o\\\o&\o&-1\\-1&\o&\o\end{array}\right)} & 8 & [(1,3,2)] & -1 & 0 & \downsize{\left[\left(\begin{array}{ccc}\o&1&\o\\\o&\o&1\\1&\o&\o\end{array}\right)\right]}& 0 & 0
\end{array}
$$
\caption{\sf Character table for the representations of the signed permutation group $K$ on $\R^3$, $\ee$ and $\ee^*\odot\ee^*$. The columns, from left to right,
give a representative $k$ for each conjugacy class, the size of the class, then the three labels $([\mu(k)],\det k, \tr k)$ which uniquely label the
class. The last of these coincides with the character $\chi^{\R^3}$. The next column gives the class of $k^2$, which suffices to compute, inductively,
$\chi^\ee$ and $\chi^{\ee^*\odot\ee^*}$ using the formulae (\ref{heauga}) and (\ref{blaski}).}
\label{table1}
\end{table}

Consider now the induced representation of $K$ on $\ee^*\odot\ee^*$, the 21-dimensional space of symmetric bilinear forms on $\ee$. This is just the adjoint
representation associated with the orthogonal representation of $K$ on $\ee$ just constructed. Hence, its character is related to $\chi^\ee$
just as in (\ref{heauga}), namely
\beq\label{blaski}
\chi^{\ee^*\odot\ee^*}(k)=\frac12\left[\chi^{\ee}(k)^2+\chi^{\ee}(k^2)\right],
\eeq
which yields column 8 of table \ref{table1}. Now
\beq
\ip{\chi^{\ee^*\odot\ee^*},\chi^{triv}}=
\frac{1}{|K|}\sum_{[k]\in K/\sim}|[k]|\chi^{\ee^*\odot\ee^*}([k])=3
\eeq
so we deduce that $(\ee^*\odot\ee^*)^K$ has dimension $3$.

A similar analysis can be performed for $K'$, the order $24$ group generated by $s_1,s_2$ alone. One finds that the spaces $\ee^{K'}$ and $(\ee^*\odot\ee^*)^{K'}$ again
have dimension $1$ and $3$ respectively. Since $K'<K$, it follows that $\ee^K<\ee^{K'}$ and $(\ee^*\odot\ee^*)^K<(\ee^*\odot\ee^*)^{K'}$ and hence, by equality of dimensions,
 $\ee^K=\ee^{K'}$, $(\ee^*\odot\ee^*)^K=(\ee^*\odot\ee^*)^{K'}$.

\section*{Acknowledgements}
This work was partially funded by the UK Engineering and Physical Sciences
Research Council.


\begin{thebibliography}{10}
\newcommand{\enquote}[1]{``#1''}

\bibitem{adasanwer}
C.~{Adam}, J.~{S{\'a}nchez-Guill{\'e}n} and A.~{Wereszczy{\'n}ski}, \enquote{{A
  Skyrme-type proposal for baryonic matter}}, {\em Phys.\ Lett.\/} {\bf B691}
  (2010), 105--110.

\bibitem{ada}
R.~A. Adams, {\em Sobolev Spaces\/} (Academic Press, London U.K., 1975).

\bibitem{baieel}
P.~Baird and J.~Eells, \enquote{A conservation law for harmonic maps}, in {\em
  Geometry {S}ymposium, {U}trecht 1980 ({U}trecht, 1980)\/}, vol. 894 of {\em
  Lecture Notes in Math.\/} (Springer, Berlin, 1981), pp. 1--25.

\bibitem{baiwoo}
P.~Baird and J.~C. Wood, {\em Harmonic morphisms between {R}iemannian
  manifolds\/}, vol.~29 of {\em London Mathematical Society Monographs. New
  Series\/} (Oxford University Press, Oxford, U.K., 2003).

\bibitem{batsut-hex}
R.~A. {Battye} and P.~M. {Sutcliffe}, \enquote{{A Skyrme lattice with hexagonal
  symmetry}}, {\em Phys.\ Lett.\/} {\bf B416} (1998), 385--391.

\bibitem{casjonjacverjac}
L.~{Castillejo}, P.~S.~J. {Jones}, A.~D. {Jackson}, J.~J.~M. {Verbaarschot} and
  A.~{Jackson}, \enquote{{Dense skyrmion systems}}, {\em Nucl.\ Phys.\/} {\bf
  A501} (1989), 801--812.

\bibitem{der}
G.~H. Derrick, \enquote{{Comments on nonlinear wave equations as models for
  elementary particles}}, {\em J. Math.\ Phys.\/} {\bf 5} (1964), 1252--1254.

\bibitem{domhoyson}
S.~K. Domokos, C.~Hoyos and J.~Sonnenschein, \enquote{{Deformation Constraints
  on Solitons and D-branes}}, {\em arXiv:1306.0789\/}  (2013).

\bibitem{eva}
L.~C. Evans, {\em Weak convergence methods for nonlinear partial differential
  equations\/}, vol.~74 of {\em CBMS Regional Conference Series in
  Mathematics\/} (Published for the Conference Board of the Mathematical
  Sciences, Washington, DC, 1990).

\bibitem{jayspesut}
J.~{Jaykka}, M.~{Speight} and P.~{Sutcliffe}, \enquote{{Broken baby
  Skyrmions}}, {\em Proc.\ R. Soc. Lond.\/} {\bf A468} (2012), 1085--1104.

\bibitem{jon}
H.~F. Jones, {\em Groups, Representations and Physics\/} (Adam Hilger, Bristol
  U.K., 1990).

\bibitem{kap-compact_skyrme}
L.~Kapitanski, \enquote{On {S}kyrme's model}, in {\em Nonlinear problems in
  mathematical physics and related topics, {II}\/}, vol.~2 of {\em Int. Math.
  Ser. (N. Y.)\/} (Kluwer/Plenum, New York, 2002), pp. 229--241.

\bibitem{henkar}
M.~Karliner and I.~Hen, \enquote{Rotational symmetry breaking in baby {S}kyrme
  models}, in G.~E. Brown and M.~Rho (editors), {\em The Multifaceted
  Skyrmion\/} (World Scientific, Singapore, 2010), pp. 179--213.

\bibitem{kle}
I.~{Klebanov}, \enquote{{Nuclear matter in the skyrme model}}, {\em Nucl.\
  Phys.\/} {\bf B262} (1985), 133--143.

\bibitem{kugsht}
M.~{Kugler} and S.~{Shtrikman}, \enquote{{A new skyrmion crystal}}, {\em Phys.\
  Lett.\/} {\bf B208} (1988), 491--494.

\bibitem{law}
D.~F. Lawden, {\em Elliptic Functions and Applications\/} (Springer-Verlag,
  London U.K., 1989).

\bibitem{lic}
A.~Lichnerowicz, \enquote{Applications harmoniques et vari\'et\'es
  k\"ahleriennes}, {\em Symp.\ Math.\ Bologna\/} {\bf 3} (1970), 341--402.

\bibitem{linyan-bs}
F.~Lin and Y.~Yang, \enquote{Existence of two-dimensional skyrmions via the
  concentration-compactness method}, {\em Comm. Pure Appl. Math.\/} {\bf 57}
  (2004), 1332--1351.

\bibitem{man-der}
N.~S. {Manton}, \enquote{{Scaling identities for solitons beyond Derrick's
  theorem}}, {\em J. Math.\ Phys.\/} {\bf 50} (2009), 032901.

\bibitem{mansut}
N.~S. Manton and P.~M. Sutcliffe, {\em Topological Solitons\/} (Cambridge
  University Press, Cambridge U.K., 2004).

\bibitem{mor}
T.~J. Morrison, {\em Functional Analysis. An Introduction to Banach Space
  Theory\/} (John Wiley and Sons, New York N.Y., U.S.A., 2001).

\bibitem{silwar}
J.~Silva~Lobo and R.~S. Ward, \enquote{Skyrmion multi-walls}, {\em J. Phys.\/}
  {\bf A42} (2009), 482001.

\bibitem{spe-semicompactons}
J.~M. {Speight}, \enquote{{Compactons and semi-compactons in the extreme baby
  Skyrme model}}, {\em J. Phys.\/} {\bf A43} (2010), 405201.

\bibitem{war}
R.~S. Ward, \enquote{Planar {S}kyrmions at high and low density}, {\em
  Nonlinearity\/} {\bf 17} (2004), 1033--1040.

\end{thebibliography}
\end{document}